\def\RR{{\mathbb R}}
\def\CC{{\mathbb C}}
\def\NN{{\mathbb N}}
\def\QQ{{\mathbb Q}}
\def\A{{\mathcal A}}
\def\B{{\mathcal B}}
\def\F{{\mathcal F}}
\def\H{{\mathcal H}}
\def\I{{\mathcal I}}
\def\K{{\mathcal K}}
\def\M{{\mathcal M}}
\def\N{{\mathcal N}}
\def\R{{\mathcal R}}
\def\S{{\mathcal S}}
\def\Z{{\mathcal Z}}
\def\a{\alpha}
\def\b{\beta}
\def\f{\varphi}
\def\i{\iota}
\def\k{\kappa}
\def\l{\lambda}
\def\r{\rho}
\def\R{{\mathcal R}}
\def\s{\sigma}
\def\t{\tau}
\def\th{\vartheta}
\def\gA{\mathfrak A}
\def\Ad{{\hbox{\rm Ad\,}}}
\def\1{{\mathbbm 1}}
\def\diff{{\rm Diff}}
\def\diffs1{\diff(S^1)}
\def\mob{{\rm M\ddot{o}b}}
\def\vir{{\rm Vir}}
\def\supp{{\rm supp\,}}
\def\psl2r{{\rm PSL}(2,\RR)}
\def\sl2r{{\rm SL}(2,\RR)}
\def\su11{{\rm SU}(1,1)}
\def\2dmob{{\overline{\psl2r}\times\overline{\psl2r}}}
\def\<{\langle}
\def\>{\rangle}
\def\Im{\mathrm{Im}\,}
\def\ln{{\mathrm{ln}}}
\def\lnu{C^*_{\mathrm{ln}}}
\DeclareMathOperator{\Tr}{Tr}
\newtheorem{theorem}{Theorem}[section]
\newtheorem{proposition}[theorem]{Proposition}
\newtheorem{lemma}[theorem]{Lemma}
\theoremstyle{remark}
\newtheorem{remark}[theorem]{Remark}
\newtheorem{example}[theorem]{Example}
\title{Rotational KMS states and type I conformal nets
}
\date{}
\author{
{\bf Roberto Longo\footnote{Supported in part by the ERC Advanced Grant 669240 QUEST ``Quantum Algebraic Structures and Models'', PRIN-MIUR, GNAMPA-INdAM and Alexander von Humboldt Foundation.} \ {\rm and} {\bf Yoh Tanimoto}\footnote{Supported by the JSPS fellowship for research abroad.}
} \\
\phantom{-}\\
Dipartimento di Matematica, Universit\`a di Roma ``Tor Vergata'' \\
Via della Ricerca Scientifica 1, 00133 Rome, Italy \\
Email: {\tt longo@mat.uniroma2.it}, {\tt hoyt@mat.uniroma2.it}
}
\begin{document}
\maketitle
\begin{abstract}
We consider KMS states on a local conformal net on $S^1$ with respect to rotations.
We prove that, if the conformal net is of type I, namely if it admits only type I DHR representations, then the extremal
KMS states are the Gibbs states in an irreducible representation.
Completely rational nets, the $\mathrm{U}(1)$-current net, the Virasoro nets
and their finite tensor products are shown to be of type I. In the completely rational case, 
we also give a direct proof that all factorial KMS states are Gibbs states.
\end{abstract}

\section{Introduction}\label{intro}
QFT, Quantum Field Theory, was originally designed to describe finitely many quantum, relativistic particles, with particle 
creation/annihilation due to the interaction. In this view, statistical mechanics aspects due to an infinitely 
many particle distribution are absent. There are however extreme situations where QFT shows a thermodynamical behaviour,
a most important one being the black hole background Hawking radiation, that lead to consider thermal states in QFT.

As is known, thermal equilibrium states at infinite volume in quantum statistical mechanics are characterized
by the KMS condition for the dynamical flow, a one-parameter automorphism group $\a_t$ of the observable $C^*$-algebra $\gA$.
A state $\f$, i.e.\! a positive linear functional on $\gA$ normalized with $\f(1) =1$, satisfies the KMS condition w.r.t.\! $\t$ 
at inverse temperature $\b>0$ if, for any $x,y\in\gA$, there is a function $F_{xy}$
analytic in the strip $S_\b=\{0<\Im z <\b\}$, bounded and continuous on the closure $\overline{S_\b}$,
such that 
\[
\begin{gathered}
F_{xy}(t)= \f\big(x\a_t(y)\big)\ , \\
F_{xy}(t+i\b)=\f\big(\a_t(y)x\big)\ ,
\end{gathered}
\]

At finite volume, where the degrees of freedom are finite,
KMS states are Gibbs states: $\f(x) = \Tr(e^{-\b H}x)/\Tr(e^{-\b H})$ with $H$ the Hamiltonian; at infinite volume,
Gibbs states might not exist as $e^{-\b H}$ is not necessarily trace class,
yet the KMS condition is preserved under the infinite volume limit.

From the mathematical viewpoint, KMS states are of most importance, being related to the Tomita-Takesaki modular theory of von Neumann algebras. The KMS condition measures, in a sense, the deviation of the state $\f$ from the tracial property $\f(xy)=\f(yx)$. 
In view of an infinite-dimensional quantum index theorem, one expects QFT to be the underlying framework and the role of the (super)-trace to be played by (super)-KMS states. The description of the KMS states then turns out a natural problem with different motivations.

This paper concerns KMS states in low dimensional CFT, Conformal Quantum Field Theory. On one hand the mathematical structure of CFT is much better understood, with very interesting connections with other mathematical subjects,
and in particular the Operator Algebraic approach 
is powerful and deep. On the other hand CFT is of much interest in Physics in various situations, e.g.\! Critical Phenomena or AdS/CFT correspondence.

CFT in $(1 + 1)$-dimensions is an extension of the tensor product of two one-dimensional
(one could say $\left(\frac12 + \frac12\right)$-dimensional) CFT, so initially one has to study
CFT on the real line or on its compactification $S^1$. The real line and the circle pictures are equivalent,
however, the physical Hamiltonian as QFT is the one associated to the translation flow in the real line picture.
The conformal Hamiltonian is the one associated with the rotation flow in the circle picture and one can usually extract
more easily information from the conformal Hamiltonian since its spectrum is discrete.

An analysis of the KMS states w.r.t.\! the translation flow has been given in \cite{CLTW12-1,CLTW12-2}.
The main result in \cite{CLTW12-1} is that, in the completely rational case, for every fixed inverse temperature $\b>0$,
there exists a unique KMS state w.r.t.\! translations, the geometric KMS state.
In the non-rational case, however, there might be uncountably many KMS states. They are all described for
the $U(1)$-current net and possibly all for the Virasoro nets \cite{CLTW12-2}.

The purpose of this paper is to investigate the KMS states with respect to the rotation flow.
In the rotational case, the first point to clarify is the choice of the $C^*$-algebra
that supports the rotational flow and on which the KMS state is to be defined.
Such a choice is natural and well known in the translation case: the $C^*$-algebra
generated by the local von Neumann algebras associated to bounded intervals of the real line.
On the other hand, the intervals of the circle do not form an inductive family
and a more thoughtful construction is necessary.
A universal $C^*$-algebras was defined by Fredenhagen, and a different construction is in \cite{Fredenhagen90, GL92}.
We shall explain in detail the construction as we need it.

We shall first give a general, complete description in the completely rational case:
every extremal KMS state is a Gibbs state in some irreducible representation.
We shall make use of the structure of the universal $C^*$-algebra in this case \cite{CCHW13};
a similar description for super-KMS states in this case is due to Hillier \cite{Hillier15}.
Rotational KMS states in the completely rational case were also studied in \cite{Iov15}.

Our results are not restricted to the rational case.
Indeed, we shall prove that any extremal rotational KMS state on a large class of non-rational conformal nets
is a Gibbs state in some irreducible representation.
The point is that, in general, the GNS representation with respect to a KMS state might
be of type II or III and could not be decomposed uniquely into irreducible (type I) representations.
We exclude this possibility for many important conformal nets.

Actually, we prove that some conformal nets are of type I, namely they do not have type II or III representations at all.
Moreover, at the moment, no example of conformal net not of type I is known.
It is possible that diffeomorphism covariance implies the type I property. 
One can understand how general the type I property is by the following.
Suppose $\A$ is a conformal net  such that, for any given $\lambda >0$,
there exists at most countably many irreducible representations $\rho$ of $\A$
such that $\lambda$ if the lowest eigenvalue of the conformal Hamiltonian
$L_0^\rho$ of $\rho$. Then $\A$ is of type I.
Many conformal nets are then immediately shown to be of type I by this criterion.
Among them are the Virasoro nets and the $U(1)$-current net. Their finite tensor products
can be shown to be of type I by a separate argument.

This paper is organized as follows.
In Section 2, we recall our operator-algebraic setting for conformal field theory and
introduce our main dynamical system, the universal $C^*$-algebra.
The fundamental examples of KMS state, the Gibbs states, are also introduced.
In Section 3, we present our classification result of KMS states.
First we are concerned with the completely rational case where the structure of the universal
$C^*$-algebra is completely understood, then we pass to the general case.
We determine that an extremal KMS state on a type I net is a Gibbs state, and
prove that some well-known nets are of type I.
The problem of the possible occurrence of type II and III representations naturally arises here and we make some observations.
In Section 4, we discuss possible applications of our results.

\section{Preliminaries}\label{preliminaries}
\subsection{Conformal nets and their representations}\label{nets}
Let us recall our mathematical framework for conformal field theory 
on the compactified one-dimensional spacetime $S^1$. See also \cite{CLTW12-1}.

Let $\I$ be the set of open, connected, non-empty and non-dense subsets (intervals) of the circle $S^1$.
A \textbf{(local) M\"obius covariant net} is a triple $(\A, U, \Omega)$ where $\A$ is a map that assigns to each $I\in\I$
a von Neumann algebra $\A(I)$ on a common Hilbert space $\H$ and satisfies the
following requirements:
\begin{enumerate}
 \item \textbf{(Isotony)} If $I_1 \subset I_2$, then $\A(I_1) \subset \A(I_2)$.
 \item \textbf{(Locality)} If $I_1 \cap I_2 = \emptyset$, then $\A(I_1)$ and $\A(I_2)$ commute.
 \item \textbf{(M\"obius covariance)} $U$ is a strongly continuous unitary representation of the M\"obius group
 $\mob = \mathrm{PSL}(2,\RR)$ on $\H$ and for any $g\in \mob$ and any interval $I\in\I$ we have
 \[
  \Ad U(g)(\A(I)) = \A(gI).
 \]
 \item \textbf{(Positivity of energy)} The generator $L_0$ of the rotation one-parameter subgroup is positive
 ($U(R_t) = e^{itL_0}$ with $R_t$ the rotation by $t$).
 \item \textbf{(Vacuum vector)} $\Omega$ is a unit vector of $\H$, which is the unique (up to a scalar)
$U$-invariant vector; $\Omega$ cyclic for $\bigcup_{I\in\I} \A(I)$.
\end{enumerate}
From these assumptions, the following automatically follow, see \cite[Section 3]{FJ96}
\begin{enumerate}
 \item[6.] \textbf{(Additivity)} If $I \subset \bigcup_{\k}I_\k$, then $\A(I) \subset \bigvee_{\k}\A(I_\k)$,
 where $\bigvee_\k \M_\k$ denotes the von Neumann algebra generated by $\{\M_\k\}$.
 \item[7.] \textbf{(Reeh-Schlieder property)} $\Omega$ is cyclic for each local algebra $\A(I)$.
\end{enumerate}

A \textbf{representation} of a M\"obius covariant net $\A$ is a family 
$\rho = \{\rho_I\}_{I\in\I}$, where $\rho_I$ is a unital $*$-representation of $\A(I)$,
on a common Hilbert space $\H_\rho$
such that $\rho_{I_2}|_{\A(I_1)} = \rho_{I_1}$ for $I_1 \subset I_2$. We say that $\rho$ is \textbf{locally normal} if
each $\rho_I$ is normal.
We say $\rho$ is \textbf{factorial} if $\bigvee_{I\in\I}\rho_I(I)$ is a factor.

A M\"obius covariant net $(\A, U, \Omega)$ is called a \textbf{conformal net}
if the representation $U$ of the M\"obius group extends to a strongly continuous projective representation
of the group $\diffs1$ of orientation-preserving diffeomorphisms of $S^1$,
that is covariant, namely $\Ad U(g)(\A(I)) = \A(gI)$, and $\Ad U(g)$
acts trivially on $\A(I)$ if $g$ is acts identically on $I$.

We say that the net $\A$ has the \textbf{split property} if for each
pair $I_1, I_2$ of intervals such that $\overline{I_1} \subset I_2$,
there is a type I factor $\N(I_1,I_2)$ such that $\A(I_1) \subset \N(I_1,I_2) \subset \A(I_2)$.
The split property follows from the conformal covariance \cite{MTW16}.

\subsection{The universal \texorpdfstring{$C^*$}{C*}-algebra}\label{universal}
For a given M\"obius covariant net $\A$, Fredenhagen \cite{Fredenhagen90} proposed to consider a $C^*$-algebra which is universal
in the sense that any representation of the net $\A$ can be regarded as a representation of this algebra.
This notion has been used widely in the study of superselection sectors in conformal field theories,
and we will take it as the algebra of our physical system.

Yet, there seems to be a confusion in the literature about the construction.
The first paper which introduced the universal $C^*$-algebra was \cite[Section 2]{Fredenhagen90}.
We take a slight variation of it: one considers the free $*$-algebra $\A_0$ generated by $\{\A(I)\}$, modulo the relations due to the inclusions $\A(I_1)\subset \A(I_2)$, for $I_1, I_2\in\I$, $I_1\subset I_2$.
Clearly a representation $\r$ of $\A$ defines a representation of $\A_0$, still denoted by $\rho$.
For a given $x \in \A_0$, one defines the seminorm by
\[
 \sup_{\rho \in \Gamma} \|\rho(x)\|,
\]
where $\Gamma$ is the class of all representations.
In the Zermelo-Fraenkel set theory with the axiom of Choice (ZFC), $\Gamma$ is not a set
(an intuitive explanation would be the following:
on each set with cardinality larger or equal to the cardinality of the continuum,
one can define a structure as a Hilbert space. Hence the class of all Hilbert spaces
is ``as large as'' the class of all sets (with cardinality larger or equal to the cardinality of the continuum),
and would cause Russell's paradox.
A precise reason is that the sets in ZFC are only those which are constructed by axiom schemas).
However, the above supremum can be justified in ZFC as follows \footnote{We owe this observation to Sebastiano Carpi.}:
For a given $x \in \A_0$, we consider the following:
\[
 \left\{s \in \RR: \text{there is a representation } \rho \text{ of } \A_0
 \text{ such that } s = \|\rho(x)\|\right\},
\]
which is a subset of $\RR$ in ZFC by the axiom schema of separation
\footnote{The axiom schema of separation reads, for a given predicate $F(x)$ with a variable $x$ as follows:
$(\exists B)(\forall x)(x\in B \leftrightarrow x \in A\; \&\; F(x))$. In words, it states that
for a set $A$ there is a subset $B$ which consists of all elements of $A$ which satisfy $F$.}
(see standard textbooks on axiomatic set theory, e.g.\! \cite{Suppes60, Jech78}).
Hence one can take the supremum and the rest follows.

Another commonly cited paper \cite[Section 8]{GL92} has a problem, because one has to take
the direct sum parametrized by ``all the representations'', which is definitely not a set.

Let us also provide a construction of the universal algebra which is closer to that of \cite{GL92}.
We consider as before the free $*$-algebra $\A_0$ generated by $\{\A(I)\}$ modulo the inclusion relations as above.
We denote by $\i_I$ the embedding of $\A(I)$ into $\A_0$.
Let $\S_0$ be the set of states (positive, unital linear functionals in the sense $\f(x^* x )\geq 0$, $x\in\A_0$) $\f$ on $\A_0$.
By definition of $\A_0$, the GNS representation $\rho_\f$ of $\A_0$ with respect to $\f$ satisfies for $I \subset J$
\[
 \rho_\f \circ \i_J|_{\A(I)} = \rho_\f\circ \i_I.
\]
Note that, for any $\f$, an element $x \in \A(I)$ is represented by a bounded operator
$\r_\f(x)$. Indeed, $x^*x \le \|x\|^2\1$ in $\A(I)$, hence $\r_\f(x^*x) \le \|x\|^2\1$ because $\r_\f |_{\A(I)}$ is a representation of a von Neumann algebra and a representation of a $C^*$-algebra is order preserving.

Now let $x$ be an element of $\A_0$; then $x$ is a finite sum $\sum_k \prod_l x_{k,l}$ of finite products of elements of $\A(I_{k,l})$, $I_{k,l}\in\I$.

Let $\rho$ be a representation of the net $\A$. Then $\r$ gives rise to a representation of $\A_0$.
With $x=\sum_k \prod_l x_{k,l}$ as above, we have
\begin{align*}
\|\r(x)\| = \left\|\r\left(\sum_k \prod_l x_{k,l}\right)\right\| &= \left\|\sum_k \prod_l \r(x_{k,l})\right\| \\
 &\leq \sum_k \left\|\prod_l \r(x_{k,l})\right\|\leq \sum_k \prod_l \|\r(x_{k,l})\|
\leq \sum_k \prod_l \|x_{k,l}\| \ ,
\end{align*}
where $\|\r(x_{k,l})\|$ is the norm of $r(x_{k,l})$ in $\A(I_{k,l})$.
Thus $ \|\r(x)\| \leq C_x < \infty$, where the constant $C_x$ does not depend on $\r$.

We define a seminorm on $\A_0$ by 
\[
 \|x\| = \sup_{\f \in \S_0}\|\rho_\f(x)\| \ ,
\]
which is finite since $\|x\| \leq C_x$,
and we take the $C^*$-completion (modulo null elements), that we denote by $C^*(\A)$. 

Let us remark that this construction avoids the set-theoretical problem:
while ``the class of all representations'' is too large to be a set,
one can consider the set of all states, because it is a subset of all maps from $\A_0$ into $\CC$
with linearity, positivity and unitarity, which can be formulated again by
the axiom schema of separation.

Now, as $C^*(\A)$ is not defined through the supremum over all representations, we have to check the universal property.
\begin{proposition}\label{pr:universal}
 For each representation $\{\rho_I\}$ of the net $\A$, there is a representation $\rho$ of the algebra $C^*(\A)$ constructed
 above such that $\rho_I = \rho \circ \i_I$.
\end{proposition}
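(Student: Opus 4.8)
The plan is to show that the net representation $\{\rho_I\}$ assembles into a single representation of the $*$-algebra $\A_0$ which is bounded by the universal seminorm, and hence descends to $C^*(\A)$. The compatibility $\rho_{I_2}|_{\A(I_1)} = \rho_{I_1}$ for $I_1 \subset I_2$ says exactly that $\{\rho_I\}$ respects the inclusion relations defining $\A_0$, so by the universal property of the free $*$-algebra modulo these relations the family induces a unital $*$-homomorphism $\rho \colon \A_0 \to \B(\H_\rho)$ with $\rho \circ \i_I = \rho_I$ for every $I \in \I$; here unitality uses that all the local units are identified in $\A_0$. Everything then reduces to the estimate
\[
 \|\rho(x)\| \le \|x\| = \sup_{\f\in\S_0}\|\rho_\f(x)\| \ , \qquad x \in \A_0 \ ,
\]
for granting this, $\rho$ annihilates the null elements and extends uniquely by continuity to a representation of the completion $C^*(\A)$, still satisfying $\rho\circ\i_I = \rho_I$.

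I would prove the estimate by expressing $\|\rho(x)\|$ through the vector states of $\rho$. For a unit vector $\x \in \H_\rho$ put $\f_\x(y) = \<\x, \rho(y)\x\>$; this is a state on $\A_0$, since $\f_\x(y^*y) = \|\rho(y)\x\|^2 \ge 0$ and $\f_\x(\1) = 1$, so $\f_\x \in \S_0$. By the standard GNS uniqueness, the map $\rho_{\f_\x}(y)\eta \mapsto \rho(y)\x$ (with $\eta$ the unit cyclic vector of $\rho_{\f_\x}$) extends to a unitary of $\H_{\f_\x}$ onto the cyclic subspace $\overline{\rho(\A_0)\x}$ intertwining $\rho_{\f_\x}$ with $\rho$ and carrying $\eta$ to $\x$. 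Hence $\|\rho(x)\x\| = \|\rho_{\f_\x}(x)\eta\| \le \|\rho_{\f_\x}(x)\| \le \|x\|$, and taking the supremum over unit vectors $\x$, together with $\|\rho(x)\| = \sup_{\|\x\|=1}\|\rho(x)\x\|$, gives $\|\rho(x)\| \le \|x\|$.

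The only genuinely substantial point, and the step I would write out in full, is this identification of $\rho_{\f_\x}$ with the cyclic subrepresentation of $\rho$ generated by $\x$; the remainder is bookkeeping on the free algebra together with the routine continuous extension to the completion. One must check that $\f_\x$ is defined on all of $\A_0$ (which uses only that $\rho$ is an honest $*$-representation there) and that the cyclic vectors are matched with the correct normalization $\|\eta\| = \|\x\| = 1$, so that the inequality $\|\rho(x)\x\| \le \|\rho_{\f_\x}(x)\|$ emerges with constant one rather than an uncontrolled factor.
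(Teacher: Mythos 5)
Your proposal is correct and takes essentially the same route as the paper: both reduce the claim to the estimate $\|\rho(x)\|\le\sup_{\f\in\S_0}\|\rho_\f(x)\|$ by realizing $\|\rho(x)\|$ as a supremum over cyclic (vector-state) subrepresentations, each of which is unitarily equivalent to a GNS representation $\rho_{\f}$ with $\f\in\S_0$. The paper phrases this as decomposing $\rho$ into a direct sum of cyclic representations, while you take the supremum over unit vectors directly; the content is identical.
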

\begin{proof}
$\{\r_I\}$ gives rise to a representation $\rho$ of $\A_0$. In order to prove that $\rho$ extends to $C^*(\A)$
we have to show that $\r$ is continuous w.r.t.\! the norm of $C^*(\A)$, namely $\|\r(x)\| \leq \|x\|$, $x\in\A_0$.
This follows because every representation of a $C^*$-algebra is direct sum of cyclic representations,
thus $\|\r(x)\|$ is the supremum of $\r_\f(x)$ with $\f$ running in a family of states.
\end{proof}

Now we may properly call $C^*(\A)$ the universal $C^*$-algebra of the net $\A$.
By the very universal property, it is unique up to an isomorphism.

Actually, we are mostly interested in locally normal representations, hence it is natural to
take account of locally normal representations only. This has been done by \cite{CCHW13}:
we take our $C^*(\A)$ and consider the locally normal universal representation  $\rho_\ln$,
which is the direct sum of all GNS representations over all states $\f$ on $C^*(\A)$ such that
$\rho_\f$ is locally normal. The universal property can be again proven by decomposing
an arbitrary representation into cyclic representations. We take the quotient
$\lnu(\A) := C^*(\A)/\ker \rho_\ln$ and call it the locally normal universal $C^*$-algebra
of the net $\A$. The properties of $\lnu(\A)$ claimed in \cite{CCHW13} can be restored
without any modification, since the actual construction is not needed in the proofs but only
the universality is used.

If the net $\A$ is conformal, any locally normal representation $\rho$ is covariant with respect
to the universal cover $\widetilde{\mob}$ of the M\"obius group
and one can take the unique implementing operators from $\rho(\lnu(\A))$, and indeed
they are finite products of local elements \cite[Theorem 6]{DFK04}.
From this it follows that the action of $\widetilde{\mob}$ on $\lnu(\A)$ is inner.

\begin{proposition}\label{pr:separable}
 Let $\A$ be a M\"obius covariant net with the split property and
 $\rho$ be a locally normal representation of $\lnu(\A)$ with a cyclic vector $\Phi$.
 Then the representation space $\H_\rho$ is separable.
\end{proposition}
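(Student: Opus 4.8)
The plan is to reduce the a priori huge representation to one generated by \emph{countably many} local von Neumann algebras acting on the cyclic vector, and then to discretise each factor using local normality. Fix a countable family $\{I_n\}_{n\in\NN}\subset\I$ of intervals with, say, rational endpoints. For an arbitrary $I\in\I$ one has $I=\bigcup\{I_n:I_n\subset I\}$, so additivity (property 6) gives $\A(I)=\bigvee\{\A(I_n):I_n\subset I\}$ as von Neumann algebras. Since $\rho_I$ is normal, hence $\sigma$-weakly continuous with $\sigma$-weakly closed range, it follows that $\rho_I(\A(I))=\bigvee\{\rho_{I_n}(\A(I_n)):I_n\subset I\}$; note that only \emph{subintervals} of $I$ enter, so that the genuine local representation $\rho_I$ can be applied. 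Consequently $\M_\rho:=\bigvee_{I\in\I}\rho_I(\A(I))$ is already generated by the countable family $\{\rho_{I_n}(\A(I_n))\}_{n}$. As $C^*(\A)$, and hence $\lnu(\A)$, is generated by $\bigcup_I\i_I(\A(I))$, we have $\rho(\lnu(\A))\subset\M_\rho$ with $\rho(\lnu(\A))''=\M_\rho$; since $\Phi$ is cyclic for $\rho(\lnu(\A))$ it is cyclic for $\M_\rho$, i.e.\ $\H_\rho=\overline{\M_\rho\Phi}$.

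Next I would record that each $\A(I_n)$ has separable predual: it acts on the separable vacuum Hilbert space $\H$, and under the split property this is seen concretely by choosing $J_n\in\I$ with $\overline{I_n}\subset J_n$ and the intermediate type I factor $\N(I_n,J_n)\cong B(K_n)$ with $K_n$ separable, so that $\A(I_n)\subset B(K_n)$ has separable predual. Normal images inherit separable preduals, because the predual of $\rho_{I_n}(\A(I_n))$ embeds isometrically into that of $\A(I_n)$. In particular the unit ball of $\A(I_n)$ is $\sigma$-weakly separable, and I fix a countable $\sigma$-weakly dense subset $\{a_{n,m}\}_{m\in\NN}$ of it.

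Finally I would show that the countable set $W$ of vectors $\rho_{I_{n_1}}(a_{n_1,m_1})\cdots\rho_{I_{n_k}}(a_{n_k,m_k})\Phi$, with $k\in\NN$ and all indices ranging over $\NN$, is total in $\H_\rho$. The $*$-algebra generated by $\{\rho_{I_n}(\A(I_n))\}$ applied to $\Phi$ spans a dense subspace of $\H_\rho=\overline{\M_\rho\Phi}$, so it suffices to approximate each product $\rho_{I_{n_1}}(b_1)\cdots\rho_{I_{n_k}}(b_k)\Phi$ with $b_j$ in the unit ball of $\A(I_{n_j})$. Replacing the factors one at a time, from the right, and using that each $\rho_{I_{n_j}}$ is normal together with weak continuity of left multiplication by a fixed bounded operator, one sees that each such product lies in the $\sigma$-weak, hence norm, closure of the linear span of $W$. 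Therefore $\H_\rho$ is the closed span of the countable set $W$, and is separable. The main obstacle is the first step: because $\rho$ is defined only on the local algebras, one cannot collapse the circle to finitely many charts directly but must generate each $\A(I)$ from genuine subintervals, so that additivity combined with the $\sigma$-weak continuity of the local representations is what powers the reduction to a countable generating family.
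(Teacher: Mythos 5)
Your proof is correct, but it takes a genuinely different route from the paper's. The paper also starts from the intervals with rational endpoints and the split property, but its key device is to build a single \emph{norm-separable} $C^*$-subalgebra $\mathfrak{A}\subset\lnu(\A)$, generated by the compact operators $\K(I_1,I_2)$ inside the intermediate type I factors $\N(I_1,I_2)$; local normality then gives $\rho(\lnu(\A))\subset\rho(\mathfrak{A})''$, and separability of $\H_\rho$ drops out in one line from $\overline{\rho(\mathfrak{A})\Phi}=\overline{\rho(\mathfrak{A})''\Phi}=\H_\rho$. You instead bypass any separable $C^*$-algebra and produce a countable total set of vectors directly: additivity plus normality of the $\rho_I$ reduces everything to the countable family $\{\rho_{I_n}(\A(I_n))\}$, separable preduals give countable $\sigma$-weakly dense subsets of the unit balls, and the ``replace one factor at a time from the right'' induction (weak limits landing in the weakly, hence norm, closed span of $W$) establishes totality. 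Both arguments are sound; what the paper's version buys is that the separable algebra $\mathfrak{A}$ is reused later (in Propositions \ref{pr:decomposition} and \ref{pr:typeI}) to run direct-integral decompositions, whereas your version is more elementary and self-contained but requires the slightly delicate iterated weak-approximation step. A small remark: in your argument the split property is used only to exhibit the separable predual of $\A(I_n)$, which already follows from separability of the vacuum Hilbert space; in the paper's proof the split property is what provides the type I factors housing the compacts, so it enters more essentially there.
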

\begin{proof}
 As in \cite[Appendix C]{KLM01}, we consider the set $\I_\QQ$ of intervals with rational end points,
 an intermediate type I factor $\N(I_1,I_2)$ between $\A(I_1) \subset \A(I_2), I_1,I_2 \in \I_\QQ, I_1 \subset I_2$
 (we just choose one $\N(I_1,I_2)$ for each pair $I_1 \subset I_2$, not necessarily the canonical choice of \cite{DL84}),
 let $\K(I_1, I_2)$ be the algebra of compact operators in $\N(I_1, I_2)$ (under the identification $\N(I_1, I_2) \cong \B(\H)$)
 and denote by $\mathfrak{A}$ the $C^*$-algebra generated by $\{\K(I_1,I_2)\}$. Note that $\mathfrak{A}$ is a separable
 $C^*$-algebra.
 
 As $\rho$ is locally normal, for each $I$ we have $\rho(\A(I)) \subset \rho(\mathfrak{A})''$.
 Indeed, if $I_1 \subset I_2 \subset I$, $I_1, I_2 \in \I_\QQ$, then
 $\K(I_1,I_2) \subset \A(I)$ and as $I_1$ tends to $I$, any element in $\A(I)$ can be approximated
 from $\mathfrak{A}$ in the $\s$-weak topology. Then the claim follows from the local normality of $\rho$,
 and it also follows that $\rho(\lnu(\A)) \subset \rho(\mathfrak{A})''$.
 
 Now, by assumption there is a cyclic vector $\Phi$ for $\rho(\lnu(\A))$, hence it is also cyclic for $\rho(\mathfrak{A})''$.
 As $\rho(\mathfrak{A})$ is a $C^*$-algebra, $\rho(\mathfrak{A})''$ is the closure of $\rho(\mathfrak{A})$ in the strong operator topology
 and we have $\overline{\rho(\mathfrak{A})\Phi} = \overline{\rho(\mathfrak{A})''\Phi} = \H_\rho$.
 As $\rho(\mathfrak{A})$ is separable, $\H_\rho$ is also separable.
\end{proof}
\begin{remark}
 The converse of Prop.\! \ref{pr:separable} is also true. If $\A$ is a M\"obius covariant net
 and $\rho$ a representation of $\lnu(\A)$ with separable $\H_\rho$, then $\rho$ is locally normal. 
 Indeed the $\A(I)$'s are type III factors, and every representation of a $\s$-finite type III factor
 on a separable Hilbert space is normal \cite[Theorem V.5.1]{TakesakiI}, while the local algebras
 $\A(I)$ are automatically $\s$-finite by the Reeh-Schlieder property: the vacuum state is faithful
 \cite[Proposition II.3.9]{TakesakiI}.
\end{remark}

\subsection{KMS states with respect to rotations,  Gibbs states}\label{kms}

Let $\mathfrak{A}$ be a $C^*$-algebra and $\a$ a one-parameter automorphism group of $\mathfrak{A}$
(not necessarily pointwise norm-continuous).

A \textbf{KMS state} of $\mathfrak{A}$ w.r.t. $\a$ at inverse temperature $\b \in \RR_+$
is a state $\f$ on $\mathfrak{A}$ such that for any pair of elements $x,y\in\mathfrak{A}$ there is a bounded analytic
function $F_{xy}$ on $\RR + i(0,\b)$, which is continuous on $\RR +i[0,\b]$, such that
\[
F_{xy}(t) = \f(\a_t(x)y),\quad F_{xy}(t + i\b) = \f(y\a_t(x)).
\]

Given a conformal net $\A$, we are interested in states on the universal $C^*$-algebra $C^*(\A)$ 
w.r.t. the rotation one-parameter automorphism group.
Any state $\psi$ on $C^*(\A)$ gives rise to a GNS representation $\rho_\psi$ of $C^*(\A)$,
whose restriction to $\{\A(I)\}$ (i.e. $\{\rho_\psi\iota_I\}_{I\in\I}$) is a representation of the net.
We say that $\psi$ is locally normal if its restriction to each local algebra $\A(I)$
is normal. We do not know whether this implies in general that the GNS representation $\rho_\psi$
is locally normal. Yet, for KMS states, we have the following Lemmas.
The proof of the first one is essentially the same as that of \cite[Theorem 1]{TW73}, 
one should only note that the funnel structure is not necessary.
\begin{lemma}\label{lm:locallynormalstate}
 Let  $\mathfrak{A}$ be a $C^*$-algebra
 which contains a $\s$-finite properly infinite von Neumann algebra $\M$,
 and $\f$ a state on $\mathfrak A$ such that the GNS vector $\Phi$ (for the
 GNS representation $\rho_\f$ with respect to $\f$) is separating
 for $\rho_\f(\mathfrak{A})''$. Then $\f|_\M$ is normal and $\rho_\f |_\M$ is normal.
\end{lemma}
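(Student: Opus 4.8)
The plan is to reduce everything to the normality of the single functional $\f|_\M$ and then to invoke the mechanism of \cite[Theorem 1]{TW73}, with the funnel structure replaced by proper infiniteness. First I would pass to the GNS data: set $\N := \rho_\f(\M)''$. Since $\M \subset \mathfrak{A}$ we have $\N \subset \rho_\f(\mathfrak{A})''$, and a vector separating for a von Neumann algebra is separating for any von Neumann subalgebra; hence $\Phi$ is separating for $\N$. Consequently the vector functional $\w_\Phi := \<\Phi,\,\slot\,\Phi\>$ is a \emph{faithful normal} state on $\N$, and by the defining property of the GNS vector it satisfies $\w_\Phi\circ\rho_\f = \f$ on $\M$.

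Second, I would show that, given this faithful normal $\w_\Phi$ on $\N$, the two assertions ``$\rho_\f|_\M$ normal'' and ``$\f|_\M$ normal'' are equivalent, so that it suffices to prove the latter. Indeed, for any increasing net of projections $p_\lambda \uparrow p$ in $\M$ the images $\rho_\f(p_\lambda)$ increase to some projection $r \le \rho_\f(p)$ in $\N$, and $\rho_\f|_\M$ is normal exactly when $r=\rho_\f(p)$ for every such net. Because $\w_\Phi$ is faithful, $r = \rho_\f(p)$ iff $\w_\Phi(\rho_\f(p)-r)=0$; and since $\w_\Phi$ is normal on $\N$,
\[
\w_\Phi(r) = \lim_\lambda \w_\Phi(\rho_\f(p_\lambda)) = \lim_\lambda \f(p_\lambda), \qquad \w_\Phi(\rho_\f(p)) = \f(p).
\]
Thus $r=\rho_\f(p)$ for all nets iff $\f$ is completely additive on the projections of $\M$, that is, iff $\f|_\M$ is normal.

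Finally, the crux is to prove that $\w := \f|_\M$ is normal. Passing to its GNS representation $\pi_\w$ on $\overline{\N\Phi}$, the vector $\Phi$ is cyclic and (by the above inheritance) separating for $\pi_\w(\M)''$, so the claim reduces to the statement that a state on a $\s$-finite properly infinite von Neumann algebra whose GNS vector is separating is automatically normal. This is precisely \cite[Theorem 1]{TW73}, and one checks that the funnel of type I factors used there is needed only to exhibit, inside $\M$, a sequence of isometries $v_n$ with $v_n^*v_n=\1$ and mutually orthogonal ranges $e_n=v_nv_n^*$ summing strongly to $\1$ ($n\in\NN$) --- which proper infiniteness of $\M$ supplies directly. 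Writing $\w=\w_n+\w_s$ for the decomposition into normal and singular parts, a nonzero $\w_s$ means $\w$ fails complete additivity on some orthogonal family; the isometries $v_n$ let one transport such a family into the standard decomposition $\{e_n\}$, so that the defect appears as a nonzero projection $\1-\mathrm{SOT}\text{-}\sum_n\pi_\w(e_n)$ in $\pi_\w(\M)''$ orthogonal to every $\pi_\w(e_n)$. Applying the shift assembled from the $\pi_\w(v_n)$ shows this defect carries mass that escapes to infinity and is therefore invisible to the separating vector $\Phi$, contradicting faithfulness of $\w_\Phi$. The main obstacle is exactly this last step --- ruling out the escape of mass to infinity --- where the isometric structure coming from proper infiniteness takes over the role played by the funnels in \cite{TW73}; the two reductions above are routine.
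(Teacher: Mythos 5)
Your two preliminary reductions are correct, and your overall strategy --- run the Takesaki--Winnink argument with the funnel replaced by proper infiniteness --- is exactly what the paper announces in the sentence preceding the lemma. The paper's written proof, however, is shorter and you should compare: since $\Phi$ is separating, $\rho_\f(\mathfrak{A})''$ is $\s$-finite, hence so is its subalgebra $\rho_\f(\M)''$, and one then quotes \cite[Theorem V.5.1]{TakesakiI} (a representation of a $\s$-finite properly infinite von Neumann algebra with $\s$-finite image is normal) to get normality of $\rho_\f|_\M$ directly; normality of $\f|_\M=\langle\Phi,\rho_\f(\slot)\Phi\rangle$ then follows as a composition of normal maps. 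So your second reduction, while valid, is run in the harder and unneeded direction, and your third step amounts to re-proving the cited theorem of \cite{TakesakiI}/\cite{TW73} rather than invoking it.

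The genuine gap is in the mechanism you sketch for that third step. If $\w_s\neq 0$ and you choose an orthogonal family $\{e_n\}$ with $\sum_n e_n=\1$ realizing the failure of complete additivity, the defect projection $G=\1-\sum_n\pi_\w(e_n)$ is \emph{not} invisible to $\Phi$: one computes $\w_\Phi(G)=1-\sum_n\w(e_n)=\w_s(\1)-\sum_n\w_s(e_n)$, which equals $\w_s(\1)>0$ for the natural choice in which the $e_n$ are taken from a maximal family annihilated by $\w_s$. In general $G\neq 0$ together with $\Phi$ separating \emph{forces} $\w_\Phi(G)>0$, so no contradiction with faithfulness can be extracted from $G$ alone; conversely, any family with $\sum_n\w(e_n)=1$ gives $\w_\Phi(G)=0$ but then you have no independent argument that $G\neq0$. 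The argument is circular as stated. The actual proofs (in \cite{TW73} and in \cite[Theorem V.5.1]{TakesakiI}) get the contradiction elsewhere: roughly, one passes to the singular subrepresentation, uses a faithful normal state on its $\s$-finite image to show that individual projections annihilated by the singular state are killed by the representation, and then uses comparison theory and proper infiniteness (e.g.\ sums over an uncountable almost disjoint family of index sets, each sum being equivalent to $\1$ and hence not killed) to manufacture uncountably many orthogonal nonzero projections in the image, contradicting its $\s$-finiteness. Your isometries are indeed the right raw material, but the ``mass escaping to infinity'' step needs to be replaced by an argument of this kind; as written it would fail.
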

\begin{proof}
 As $\Phi$ is separating for $\rho_\f(\mathfrak{A})''$, $\rho_\f(\mathfrak{A})''$ is $\s$-finite,
 hence $\rho_\f(\M)''$ is $\s$-finite as well.
 Then the restriction $\rho_\f$ to a properly infinite algebra $\M$ is normal
 \cite[Theorem V.5.1]{TakesakiI}.
\end{proof}

\begin{lemma}\label{lm:locallynormal}
 Let $\f$ be a KMS state on $C^*(\A)$ with respect to the rotation flow $\a$. Then $\f$
 is locally normal and its GNS representation $\rho_\f$ is locally normal.
\end{lemma}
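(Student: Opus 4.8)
The plan is to deduce everything from Lemma~\ref{lm:locallynormalstate}, applied interval by interval. Taking the generic algebra there to be $\mathfrak A = C^*(\A)$ and fixing $I \in \I$, I would set $\M = \i_I(\A(I))$. Two things then have to be checked: that $\M$ is a $\sigma$-finite, properly infinite von Neumann algebra sitting inside $C^*(\A)$, and that the GNS vector $\Phi$ of the KMS state $\f$ is separating for $\rho_\f(C^*(\A))''$. Granting both, Lemma~\ref{lm:locallynormalstate} gives that $\f|_\M = \f\circ\i_I$ and $\rho_\f|_\M = \rho_\f\circ\i_I$ are normal; as $I\in\I$ is arbitrary, this is precisely the local normality of $\f$ and of $\rho_\f$.

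For the first point, note first that $\i_I$ is isometric: restricted to $\A(I)$ the defining seminorm of $C^*(\A)$ is the supremum of $\|\rho_\f(x)\|$ over states, and the vacuum representation already realizes $\|x\|_{\A(I)}$ faithfully, so $\|\i_I(x)\| = \|x\|_{\A(I)}$. Hence $\i_I(\A(I))$ is a $C^*$-subalgebra of $C^*(\A)$ isometrically $*$-isomorphic to $\A(I)$, and therefore shares its von Neumann structure, its type, and its $\sigma$-finiteness. By the Remark after Proposition~\ref{pr:separable} the algebra $\A(I)$ is a type III factor, so $\M$ is properly infinite, while the Reeh--Schlieder property makes the vacuum state faithful on $\A(I)$, so $\M$ is $\sigma$-finite.

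The crux, and the step I expect to require the most care, is that $\Phi$ is separating for $M := \rho_\f(C^*(\A))''$. A KMS state is $\a$-invariant, so $\a$ is implemented on $\H_\f$ by a unitary group fixing $\Phi$, and $\omega_\Phi(\slot) := \langle\Phi,\,\slot\,\Phi\rangle$ is a normal extension of $\f$ to $M$. Let $s\in M$ be the support projection of $\omega_\Phi$. I would prove $s$ is central; since $\Phi$ is cyclic for $M$ and $s\Phi=\Phi$, centrality forces $s=1$, which is exactly the separating property. Centrality is where the KMS condition enters: for $v = s v (1-s) \in M$ one has $\omega_\Phi(v^*v) \le \|v\|^2\,\omega_\Phi(1-s) = 0$, and the KMS relation linking $\omega_\Phi(v^*v)$ with $\omega_\Phi(vv^*)$ forces $\omega_\Phi(vv^*)=0$ as well; faithfulness of $\omega_\Phi$ on $sMs$ then yields $v=0$, so $sM(1-s)=0$ and $s$ is central. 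The delicate point throughout is that the rotation flow on $C^*(\A)$ is not assumed pointwise norm-continuous, so I cannot quote the standard $C^*$-dynamical-system form of Tomita--Takesaki theory verbatim; instead I must run the argument through the boundary values of the analytic functions $F_{xy}$ supplied by the KMS condition, approximating $v$ within $M$ accordingly.
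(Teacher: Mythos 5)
Your proposal is correct and follows essentially the same route as the paper: both reduce the statement to Lemma~\ref{lm:locallynormalstate} applied to the $\sigma$-finite, properly infinite (type III) local algebras, together with the fact that the GNS vector of a KMS state is separating for $\rho_\f(C^*(\A))''$. The only differences are that you spell out two points the paper leaves implicit or cites --- the isometric embedding $\i_I(\A(I))\subset C^*(\A)$, and the separating property, for which the paper simply invokes \cite[Lemma 5.3.8 and Corollary 5.3.9]{BR2} while you sketch the standard support-projection argument (correctly flagging the extension of the KMS condition to the weak closure as the step needing care in the absence of pointwise norm-continuity).
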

\begin{proof}
The local algebras in the vacuum representation have a separating vector $\Omega$,
hence they are $\s$-finite, and are known to be of type III$_1$ \cite[Proposition 1.2]{GL96}.
Now the claim follows from Lemma \ref{lm:locallynormalstate} and the fact that the GNS vector $\Phi$ is separating for $\rho_\f(\A)''$ for any KMS state $\f$
 (see \cite[Lemma 5.3.8 and Corollary 5.3.9]{BR2}. The pointwise norm-continuity assumption of $\a$ is
 not necessary for this result).
\end{proof}
Thanks to these Lemmas, we do not have to distinguish $C^*(\A)$ and $\lnu(\A)$ as long as
we are interested in KMS states.

\begin{remark}
In the real line case, the GNS representation of every locally normal state 
(i.e.\! normal on each local algebra) is locally normal. To see this,
let $\f$ be a locally normal state of the quasi-local $C^*$-algebra
$\mathfrak A \equiv \overline{\bigcup_{I \Subset \RR} \A(I)}^{\|\cdot\|}$
with GNS triple $(\H,\rho,\Phi)$ and fix an interval $I\in\I$. The restriction of $\rho_I$ to $\H_I\equiv\overline{\rho(\A(I))\Phi}$ is normal as it is the GNS representation of a normal state. For any larger interval $\tilde I\supset I$ we have $\rho_I =\rho_{\tilde I}\big\vert_{\A(I)}$, so $\rho_I$ is normal on $\H_{\tilde I}$ too. 
Since the $\H_{\tilde I}$'s form an inductive family whose union is dense in $\H$ by the cyclicity of $\Phi$, it follows that $\rho_I$ is normal on $\H$.
\end{remark}

Let $\rho$ be a locally normal, rotation-covariant, irreducible representation of $\A$ in which $e^{-\b L_0^\rho}$ is trace class for $\b > 0$,
where $L_0^\rho$ is the generator of the one-parameter unitary group of rotations.
This is a typical situation that holds true in most important cases.
Then one can define the following \textbf{Gibbs state} on $C^*(\A)$:
\begin{equation}\label{Gibbs}
 \f_{\rho,\b}(x) = \frac{\Tr \big(e^{-\b L_0^\rho}\rho(x)\big)}{\Tr(e^{-\b L_0^\rho})}\ .
\end{equation}
$\f_{\rho,\b}$ is a (locally normal) rotational $\b$-KMS state of $C^*(\A)$.
Thus we have a natural class of KMS states.
The relation between the KMS condition and the Gibbs states at the given temperature can be found in \cite{Haag96}.
An early consideration of rotational Gibbs states can be found in \cite{Schroer94}.

For some important class of nets, the structure of the irreducible representations is well understood. 
This is the case, in particular,
for the class of completely rational nets, which we will consider in Section \ref{rational}.
In such cases, we shall see that all extremal KMS states are Gibbs states as in \eqref{Gibbs}.
Our main question is whether this is always true.
We show this to be true under a mild condition in Section \ref{typeIrep}, 
but the question remains open in general.

\subsection{Energy expectation value}
The stress energy density in a Gibbs state can be computed through the character formula.
For a test function $f$ with support in an interval $I\in\I$, the stress energy tensor $T$ in the vacuum representation,
smoothed with $f$, is an unbounded operator $T(f)$ affiliated to $\A(I)$. If $\r$ is an irreducible representation of $\A$,
we may define $T_\r(f) = \r(T(f))$, making use that bounded functions, e.g.\! the resolvent, of $T(f)$ belong to $\A(I)$.
The expectation value of the stress-energy tensor in the Gibbs state is then
\[
 \f_{\rho,\b}(T(f)) = \frac{\Tr (e^{-\b L_0^\rho}T_\r(f))}{\Tr(e^{-\b L_0^\rho})}\ .
\]
This is indeed finite if, for example, there is $\epsilon > 0$ such that $\Tr(e^{-(\b-\epsilon) L_0^\rho})$
is finite because the polynomial energy bound holds for the Virasoro algebra \cite[Lemma 4.1]{CW05}
(which implies that $e^{-\epsilon L_0^\rho} T(f)$ is bounded). This condition is quite generic.

Furthermore, in such a case,
one can compute this value by expanding $T_\r(f) = \sum f_nL^\r_n$,
where the $f_n =\frac1{2\pi}\int^\pi_{-\pi}f(e^{it})e^{-int}dt$ are the Fourier modes of $f$.
As $L^\r_n$, $n\neq 0$ changes the energy eigenvalues while $\Tr $
can be computed by expanding along a basis of $L^\r_0$ eigenvectors,
all the contributions from $L^\r_n, n\neq 0$ drop out and we have
\[
 \f_{\rho,\b}(T(f)) = f_0\frac{\Tr (e^{-\b L_0^\rho}L^\r_0)}{\Tr(e^{-\b L_0^\rho})} = f_0\frac{-d\chi_\rho(e^{-s})/ds|_{s=\b}}{\chi_\rho(e^{-\beta})}
 = -\frac1{2\pi}\int^\pi_{-\pi}f(e^{it})dt\cdot \frac{d\chi_\rho(e^{-s})/ds}{\chi_\rho(e^{-s})}\Big|_{s=\b}\ ,
\]
where $\chi_\rho(q) = \Tr q^{L_0^\rho}$ is known as the character of the representation $\rho$. Thus
\[
 \f_{\rho,\b}(T(1)) =
   -\frac1{2\pi}\frac{d\chi_\rho(e^{-s})/ds}{\chi_\rho(e^{-s})}\Big|_{s=\b} 
 = \frac{q}{2\pi }\frac{d\chi_\rho(q)/dq}{\chi_\rho(q)}\Big|_{q=e^{-\b}}
 = \frac{q}{2\pi }\frac{d\log(\chi_\rho(q))}{dq}\Big|_{q=e^{-\b}}.
\]
The characters for some specific examples can be found in the literature,
e.g.\! \cite{KR87}.

\section{Classification of KMS states}
\subsection{Completely rational case}\label{rational}
In this section we determine all KMS states in the completely rational case.

Let $\A$ be a M\"obius covariant net on $S^1$. Following \cite{KLM01},
one defines the $\mu$-index $\mu_\A$ of $\A$ as the Jones index of the 4-interval inclusion:
\[
\mu_\A \equiv \left[\left(\A(I_1)\vee\A(I_3)\right)' : \A(I_2)\vee\A(I_4)\right],
\]
where $I_k\in\I$, $k=1...4$, are disjoint intervals in $S^1$ whose union is dense in $S^1$ and $I_k, I_{k+2}$, $k=1,2$, have no common boundary point.

$\A$ is said to be \emph{completely rational} if $\mu_\A <\infty$ and $\A$ satisfies the split property and the strong additivity property 
\cite{KLM01}.
The split property follows from the trace class property of $e^{-\b L_0}$, for all $\b >0$
in the vacuum representation \cite{BDL07}; it holds automatically for a conformal net \cite{MTW16}.
The strong additivity property automatically holds for a conformal net with the split property and finite $\mu$-index \cite{LX04}. Thus, for a local conformal net $\A$, the only condition for $\A$ to be completely rational is $\mu_\A < \infty$. 

If $\A$ is completely rational, then 
\[
\mu_\A = \sum_k d(\rho_k)^2
\]
where $\{\rho_k\}$ is a complete family of irreducible inequivalent representations of $\A$ and
$d(\rho_k)$ is the dimension of $\rho_k$. It follows that $\A$ has only finitely many irreducible representations, all of them have finite index and 
every representation is a direct sum of irreducible finite index representations \cite{KLM01}. 

As shown in \cite{CCHW13}, the locally normal universal $C^*$-algebra $\lnu(\A)$ takes a particularly simple form in the completely rational case.

\begin{theorem}{\rm \cite{CCHW13}}\label{th:CCHW}
If $\A$ is a completely rational net, then $\lnu(\A)$ is isomorphic to a finite direct sum of type I factors:
\[
\lnu(\A) = \F_0\oplus\F_1\oplus\cdots\oplus \F_n\ ,
\]
with $\F_k= \B(\H_k)$, where $\H_k$, $k= 0,1,\dots n$ corresponds to inequivalent irreducible
representations of the net $\A$. In particular, $\lnu(\A)$ is a von Neumann algebra and its center is finite dimensional.
\end{theorem}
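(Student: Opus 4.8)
The plan is to realise $\lnu(\A)$ concretely through the finitely many irreducible sectors and then show that the resulting map is an isomorphism onto $\bigoplus_k \B(\H_k)$. By complete rationality and \cite{KLM01}, $\A$ has only finitely many inequivalent irreducible locally normal representations $\rho_0,\dots,\rho_n$ (with $\rho_0$ the vacuum), each of finite dimension $d(\rho_k)$; by Proposition \ref{pr:separable} each space $\H_k$ is separable. The first step is to check that every locally normal representation decomposes as a direct sum of the $\rho_k$ with multiplicities: this holds because complete rationality makes the representation category a semisimple, finite modular tensor category and forces Haag duality, so that there are no locally normal sectors beyond the $\rho_k$. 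Consequently the locally normal universal representation $\rho_\ln$ and $\pi \equiv \bigoplus_{k=0}^n \rho_k$ have the same kernel, namely $\bigcap_k \ker \rho_k$, and therefore $\lnu(\A) = C^*(\A)/\ker\rho_\ln \isom \pi(C^*(\A))$. Since the $\rho_k$ are irreducible and mutually disjoint, $\pi(C^*(\A))'' = \bigoplus_k \rho_k(C^*(\A))'' = \bigoplus_k \B(\H_k)$; thus the whole statement reduces to proving that the $C^*$-algebra $\pi(C^*(\A))$ is already $\s$-weakly closed, i.e. that $\pi$ maps \emph{onto} $\bigoplus_k \B(\H_k)$.

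For surjectivity I would first produce the compact operators in each sector. Each $\rho_k$ restricts to an irreducible representation of the separable $C^*$-algebra $\mathfrak A$ of Proposition \ref{pr:separable}, and for intervals $I_1\subset I_2$ lying outside the localisation region of $\rho_k$ the representation $\rho_k$ is unitarily equivalent to the vacuum there, hence carries the compacts $\K(I_1,I_2)\subset\mathfrak A$ to genuine compact operators on $\H_k$. Because an irreducible representation of a $C^*$-algebra whose image meets $\K(\H_k)$ nontrivially must contain all of $\K(\H_k)$, we obtain $\K(\H_k)\subseteq \rho_k(C^*(\A))$ for every $k$. Using that the finitely many $\rho_k$ are mutually disjoint one then separates the blocks and gets $\bigoplus_k \K(\H_k)\subseteq \pi(C^*(\A))$.

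It remains to capture the non-compact part, and this is the step I expect to be the main obstacle. The $C^*$-algebra generated by $1$ and $\bigoplus_k \K(\H_k)$ is only $\CC\,1 + \bigoplus_k \K(\H_k)$, which neither exhausts $\bigoplus_k \B(\H_k)$ nor even contains the central projections $p_k$, so the missing operators must come from the local algebras together with the charge-transport data encoded in $\lnu(\A)$. Here the fact that in $C^*(\A)$ elements of disjoint intervals are \emph{not} forced to commute is essential: in a charged sector $\rho_k$ the local images $\rho_k(\A(I))$ are type III factors (hence contain no nonzero compacts), and, combined with the finite-index structure — the charged intertwiners and the associated $Q$-system relating the sectors — they should generate, modulo the compacts already obtained, the full algebras $\bigoplus_k \B(\H_k)/\K(\H_k)$ and in particular separate the summands. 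Establishing precisely that the images of the local type III algebras and of the intertwiners generate all of $\bigoplus_k \B(\H_k)$ in \emph{norm}, rather than merely $\s$-weakly, is the crux of the argument. Once this is done, $\lnu(\A)=\bigoplus_{k=0}^n \B(\H_k)$ is manifestly a von Neumann algebra whose center $\bigoplus_k \CC\,p_k \isom \CC^{\,n+1}$ is finite dimensional, completing the proof.
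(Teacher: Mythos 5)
First, a point of comparison: the paper does not prove this statement at all --- it is quoted from \cite{CCHW13} --- so your attempt has to be measured against that reference rather than against anything in the present text. Your framing is the correct one and matches the actual strategy: finitely many irreducible sectors by \cite{KLM01}, every locally normal representation a direct sum of them, hence $\lnu(\A)\cong\pi(C^*(\A))$ with $\pi=\bigoplus_k\rho_k$ and $\pi(C^*(\A))''=\bigoplus_k\B(\H_k)$, so that the entire content of the theorem is the surjectivity of $\pi$, i.e.\ that the norm-closed image is already weakly closed. You identify this correctly.

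The difficulty is that neither of the two steps you offer toward surjectivity holds up. The compacts step rests on a false identification: $\K(I_1,I_2)$ consists of the operators that are compact \emph{relative to the type I factor} $\N(I_1,I_2)$, not compact operators on the Hilbert space. In any locally normal representation $\rho_k$, the normal image $\rho_k(\N(I_1,I_2))$ is a type I subfactor of $\B(\H_k)$ whose relative commutant contains the type III factor $\rho_k(\A(I_2'))$ by locality; writing $\H_k\cong\K\otimes\K'$ accordingly, with $\dim\K'=\infty$, the set $\rho_k(\K(I_1,I_2))$ lies in $\K(\K)\otimes\CC\1_{\K'}$, which meets $\K(\H_k)$ only in $0$. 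So $\K(\H_k)\subseteq\rho_k(C^*(\A))$ does not follow, and there is no cheap route to the compacts: $\rho_k(C^*(\A))$ contains them only because it is all of $\B(\H_k)$, which is the conclusion being sought. The remaining step you explicitly leave as ``should generate,'' but that is precisely where complete rationality has to enter --- the finite direct sum structure fails without it (the $\mathrm{U}(1)$-current net already has uncountably many sectors) --- so any proof must exploit the finite $\mu$-index, concretely the finite-index two-interval inclusion $\A(E)\subset\hat\A(E)$ and the charge transporters it contains, which is the mechanism by which \cite{CCHW13} produces the central projections $e_k$ inside the algebra and the full blocks $\B(\H_k)$ in norm. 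As it stands, your proposal is a correct reduction plus an incorrect lemma plus an acknowledged gap at the decisive point, so it does not yet constitute a proof.
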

The minimal central projections $e_k$ of $\lnu(\A)$ are thus in one-to-one correspondence with the irreducible representations $\rho_k$ of $\lnu(\A)$:
\begin{equation}\label{sect}
\rho_k(x) = xe_k\ , \quad x\in \lnu(\A)\ ,
\end{equation}
say with $\rho_0$ the vacuum representation.

Recall that, as a completely rational net, it admits only finitely many irreducible representations (up to equivalence), 
so any representation is M\"obius covariant (see \cite[Corollary 7.2]{GL92},
and the modification to the circle is straightforward).
With $U_0$ the unitary representation of $\mob$ associated with the net $\A$,
the adjoint action of $U_0$  on the net $\A$ gives, by the universal property of $\lnu(\A)$,
an automorphism group of $\lnu(\A)$ that acts trivially on the center. In view of Theorem \ref{th:CCHW}, 
\[
U= U_0\oplus\cdots \oplus U_n\ ,
\]
with $U_k$ the covariance unitary representation of $\widetilde{\mob}$ in the representation $\r_k$.

Let now $\f$ be an extremal $\b$-KMS state of $\lnu(\A)$ w.r.t.\! the rotation one-parameter group $\a_t$.
As the GNS representation of $\f$ acts on a separable Hilbert space by Proposition \ref{pr:separable}
and Lemma \ref{lm:locallynormal}, it must either be faithful on or annihilate the components $\B(\H_k)$.
Being extremal, the support of $\f$ is $e_k$ for some $k$, namely $\f(e_j) = \delta_{jk}$.
Thus $\f$ can be viewed as a normal state on $\B(\H_k)$ and we have:

\begin{theorem}\label{th:CR}
Let $\A$ be a completely rational net as above, and $\f$ an extremal, rotational $\b$-KMS state.
Then there exists an irreducible representation $\r$ of $\A$  such that
\[
 \f(x) =\frac{\Tr \big(e^{-\b L_0^\rho}\rho(x)\big)}{\Tr(e^{-\b L_0^\rho})}\ , \quad x\in \lnu(\A)\ ,
\]
In particular $e^{-\b L_0^{\r}}$ is trace class.
\end{theorem}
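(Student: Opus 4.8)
The reduction already carried out leaves us with the following situation: $\f$ is a $\b$-KMS state on the type I factor $\F_k=\B(\H_k)$ for the dynamics $\a_t=\Ad U_k(r_t)=\Ad e^{itL_0^\rho}$, where $\rho:=\rho_k$ is irreducible and $L_0^\rho\ge 0$ by positivity of energy. Since for a conformal net the rotation unitaries $U_k(r_t)$ can be taken inside $\rho_k(\lnu(\A))=\B(\H_k)$ (the implementing operators lie in the algebra, as recalled above), the flow $\a$ is \emph{inner}, implemented by the positive-energy one-parameter group $e^{itL_0^\rho}\in\B(\H_k)$. Thus the statement reduces to the classical assertion that a KMS state on $\B(\H)$ for an inner, positive-energy dynamics is the Gibbs state; the plan is to prove this directly, exploiting the representation structure.

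First I would use M\"obius covariance: decomposing $U_k$ into irreducible positive-energy representations of $\widetilde{\mob}$, each summand is a lowest-weight module, so $L_0^\rho$ has a lowest eigenvalue and (in the completely rational case, where the finitely many representations have finite index) discrete spectrum; fix an orthonormal eigenbasis $\{\xi_n\}$ with $L_0^\rho\xi_n=\lambda_n\xi_n$, $\lambda_n\ge 0$. Writing $e_{mn}=\lvert\xi_m\rangle\langle\xi_n\rvert$ for the associated matrix units and $w_n:=\f(e_{nn})\ge 0$, each $e_{mn}$ is entire for $\a$ because $\a_t(e_{mn})=e^{it(\lambda_m-\lambda_n)}e_{mn}$. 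Invariance of $\f$ under $\a$ forces $\f(e_{mn})=0$ whenever $\lambda_m\ne\lambda_n$, while the KMS condition applied to $e_{mn}$ and $e_{nm}$ gives $w_m=e^{-\b(\lambda_m-\lambda_n)}w_n$, so that $w_m\,e^{\b\lambda_m}$ is independent of $m$; call the common value $Z^{-1}$. Hence $w_n=Z^{-1}e^{-\b\lambda_n}$. Positivity gives $\sum_{n}w_n\le\f(1)=1$, whence $\sum_n e^{-\b\lambda_n}=Z\sum_n w_n\le Z<\infty$ once $Z<\infty$ is known, and $\f$ agrees with $\f_{\rho,\b}$ on every $e_{mn}$.

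The one remaining point, and the one I expect to be the main obstacle, is normality: a priori $\f$ could carry a singular part supported ``at infinite energy'', so that $\sum_n w_n<1$ and the Gibbs formula would not capture all of $\f$. To exclude this I would argue as follows. Every singular state annihilates the finite-rank operators, so the normal part $\f_{\mathrm n}$ of $\f$ already satisfies $\f_{\mathrm n}(e_{nn})=w_n$, and $\f=\f_{\mathrm n}$ if and only if $\sum_n w_n=1$. Since $\a_t$ is a normal automorphism it preserves the normal/singular decomposition, and $\f\circ\a_t=\f$ forces both $\f_{\mathrm n}$ and the singular part to be separately $\a$-invariant. Using the KMS boundary relation together with the positivity of $L_0^\rho$ (the spectrum is bounded below, so there is a lowest-energy subspace that the flow cannot deplete) I would show that an $\a$-invariant singular part is incompatible with the KMS condition and must vanish; equivalently, one shows $\f\big(E_{L_0^\rho}([0,N])\big)\to 1$ as $N\to\infty$. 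Granting normality, $\f=\f_{\mathrm n}=Z^{-1}\sum_n e^{-\b\lambda_n}\langle\,\cdot\,\xi_n,\xi_n\rangle=\f_{\rho,\b}$ with $Z=\Tr e^{-\b L_0^\rho}=\sum_n e^{-\b\lambda_n}<\infty$, which is exactly the assertion of the theorem, including the trace-class conclusion. I expect every step except this exclusion of the singular part to be routine; that step is where the conformal positivity of energy must genuinely be used.
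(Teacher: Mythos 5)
Your reduction to a rotation $\b$-KMS state on the single type I factor $\F_k=\B(\H_k)$ with inner dynamics $\Ad e^{itL_0^{\rho_k}}$, and the ladder computation ($\f(e_{mn})=0$ for $\lambda_m\neq\lambda_n$, $w_m e^{\b\lambda_m}$ constant) are correct and agree with the routine part of the argument. But the step you yourself flag as the main obstacle --- excluding a singular part --- is left unproved, and the mechanism you propose for it does not work. If $\f$ had a nonzero singular component $\f_{\mathrm s}$, that component would itself be a $\b$-KMS state (the normal/singular decomposition corresponds to a central projection in the GNS von Neumann algebra, and central cut-downs of KMS states are KMS) vanishing on all compact operators. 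On such a state every relation extracted from matrix units is vacuous ($0=e^{-\b(\lambda_m-\lambda_n)}\cdot 0$), and $\f_{\mathrm s}\big(E_{L_0^\rho}([0,N])\big)=0$ for all $N$ is perfectly compatible with $\a$-invariance: the dynamics is almost periodic, so singular invariant states exist in abundance, and positivity or lower-boundedness of $L_0^\rho$ ``depletes'' nothing. To derive a contradiction from the KMS condition one needs \emph{non-compact} analytic elements, e.g.\ an isometry $V$ with $\a_t(V)=e^{itc}V$ and $1-VV^*$ of finite rank, which would force $\f(1-VV^*)=1-e^{-\b c}>0$; but the existence of such a $V$ depends on the multiplicity growth of the eigenspaces of $L_0^\rho$ and is not available in general. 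Note also that your trace-class conclusion ($Z<\infty$) only follows once some $w_n>0$ is known, so it too hinges on the missing step.

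The paper closes this gap by an entirely different route that uses no positivity of energy. By Lemma~\ref{lm:locallynormal} a rotational KMS state is locally normal with locally normal GNS representation, and by Proposition~\ref{pr:separable} (split property: the separable $C^*$-algebra generated by compacts in intermediate type I factors is strongly dense in the image) the GNS Hilbert space is \emph{separable}. A representation of $\B(\H_k)$ on a separable Hilbert space has no singular part, since the singular part would factor through the Calkin algebra, which --- being simple and containing uncountably many pairwise orthogonal projections --- admits no nonzero representation on a separable space. Hence $\f|_{\B(\H_k)}$ is normal, and Lemma~\ref{lm:type1gibbs} (a normal state on a type I factor is $\Tr(T\,\cdot\,)$ with $T$ trace class; write $T=e^{-\b H}$ and identify $H$ with $L_0^\rho$ up to an additive constant by uniqueness of the modular group) yields the Gibbs formula and the trace-class assertion simultaneously. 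If you want to complete your proof, replace your exclusion of the singular part by this separability argument; the rest of your computation then becomes an alternative, more explicit finish in place of Lemma~\ref{lm:type1gibbs}.
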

\begin{proof}
By the above discussion, $\r$ is equal to a $\r_k$ given in \eqref{sect}, so the proof follows by the following lemma, which is essentially known.
\end{proof}

\begin{lemma}\label{lm:type1gibbs}
Let $\R=\B(\H)$ be a type I factor, $\th$ a one-parameter automorphism group, and $\f$ a normal $\b$-KMS state of $\R$ w.r.t.\! $\th$.
Then there exists a positive, non-singular, selfadjoint operator $H$ on $\H$ (thus affiliated to $\R$) such that
\[
 \f(x) = \frac{\Tr\big(e^{-\b H} x\big)}{\Tr (e^{-\b H})}\ , \quad x\in \R\ .
\]
We have $\Tr(e^{-\b H})<\infty $ and $\th_t(x) = \Ad e^{i t H}(x)$, $x\in \R$, $t\in \mathbb R$.
\end{lemma}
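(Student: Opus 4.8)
The plan is to reduce the statement to the density-matrix description of normal states on a type I factor, supplemented by the uniqueness of the flow for which a faithful normal state is KMS. First I would invoke that every normal state $\f$ on $\R=\B(\H)$ is of the form $\f(x)=\Tr(\r x)$ for a unique positive trace-class operator $\r$ with $\Tr\r=1$. Since $\f$ is a KMS state, its GNS vector is separating for $\rho_\f(\R)''$ (as recalled in the proof of Lemma~\ref{lm:locallynormal}, using \cite{BR2}); hence $\f$ is faithful and $\r$ has trivial kernel. Setting $H:=-\tfrac1\b\log\r$ then gives a selfadjoint operator affiliated to $\R$ with $e^{-\b H}=\r$. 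Because $0<\r\le\1$ and, by $\Tr\r=1$, the eigenvalue $1$ cannot occur (for $\dim\H>1$), $H$ is positive and non-singular, and $e^{-\b H}=\r$ is trace class with $\Tr(e^{-\b H})=1<\infty$. With this $H$ the Gibbs formula $\f(x)=\Tr(e^{-\b H}x)/\Tr(e^{-\b H})$ holds by construction.

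It remains to identify the flow, and here I would argue by uniqueness rather than reconstruct $\th$ directly. A short computation using cyclicity of the trace and the commutation of $e^{-\b H}$ with $e^{itH}$ shows that $\psi_t:=\Ad\,e^{itH}$ makes $\f$ a $\b$-KMS state: for $x,y\in\R$ the function $z\mapsto\Tr\!\big(e^{-\b H}e^{izH}xe^{-izH}y\big)$ is bounded and analytic on the strip, continuous up to the boundary, and its values on $\RR$ and on $\RR+i\b$ reproduce the two boundary values required by the $\b$-KMS condition, the shift by $i\b$ being produced by $e^{i(t+i\b)H}=e^{itH}e^{-\b H}$. Consequently both $\th$ and $\psi$ are flows for which the fixed faithful normal state $\f$ satisfies the $\b$-KMS condition. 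Invoking the uniqueness in Tomita--Takesaki modular theory --- a faithful normal state admits, for each fixed $\b$, at most one flow for which it is $\b$-KMS --- I would conclude $\th_t=\psi_t=\Ad\,e^{itH}$, which is the remaining assertion.

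The step I expect to be the main obstacle is this last one, because $\th$ is not assumed to be $\s$-weakly (or pointwise norm) continuous, whereas the uniqueness theorem is usually stated for continuous flows. I would therefore need to extract the necessary regularity from the KMS condition itself: the boundedness, analyticity and boundary continuity of the functions $F_{xy}$ force $t\mapsto\f(\th_t(x)y)$ to be continuous, and through the GNS construction associated with $\f$ this upgrades to $\s$-weak continuity of $t\mapsto\th_t$, which is exactly what the uniqueness argument requires. Establishing this regularity, together with the faithfulness of the KMS state used above, is where the genuine work lies; the identification of the modular data and the Gibbs formula are then routine.
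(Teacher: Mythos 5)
Your argument is correct and follows essentially the same route as the paper's proof: represent the faithful normal state by an injective density matrix via \cite[Proposition 2.4.3]{BR2}, set $H=-\tfrac1\b\log\rho$, observe that $\Ad e^{itH}$ satisfies the $\b$-KMS condition for $\f$ (the paper instead cites that $\Ad e^{-i\b tH}$ is the modular group), and conclude by the uniqueness of the flow for which a faithful normal state is KMS \cite[Theorem VIII.1.2]{TakesakiII}. Your additional remark on extracting $\s$-weak continuity of $\th$ from the boundary continuity of the $F_{xy}$ addresses a hypothesis of that uniqueness theorem which the paper's proof leaves implicit, and your sketch of how to obtain it (continuity of $t\mapsto\f(\th_t(x)y)$ plus density of $\{y\rho\}$ in the predual and uniform boundedness) is sound.
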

\begin{proof}
Since $\R$ is a factor, $\f$ is faithful due to the KMS property:
this follows from the faithfulness of the GNS representation and 
the fact that the GNS vector is separating for a KMS state \cite[Lemma 5.3.8 and Corollary 5.3.9]{BR2}.
As $\R$ is a type I factor, there exists a positive, non-singular trace class operator $T$ with trace one \cite[Proposition 2.4.3]{BR2}
such that $\f(x) = \Tr(Tx)$. We may write $T = e^{-\b H}$ with a self-adjoint operator $H$ and, as $T$ is bounded,
the spectrum of $H$ is bounded below.
By adding a scalar, we may assume that $H$ is positive, but then the trace $\Tr(e^{-\b H})$ is no longer $1$, so
we have the formula $\f(x) = \frac{\Tr(e^{-\b H} x)}{\Tr(e^{-\b H})}$.

Then $t\mapsto \Ad e^{-i\b t H}$ is the modular group of $\f$ \cite[Example 2.5.16]{BR2}.
Therefore, we have $\Ad e^{i t H} = \th_t$ as there is a unique one-parameter automorphism group which satisfies the KMS condition
with respect to the state $\f$ \cite[Theorem VIII.1.2]{TakesakiII}.
\end{proof}

\subsection{General case}

Let $\A$ be a M\"obius covariant net.
We say that a (locally normal) representation $\rho$ of $\A$ is of \textbf{type I} if $\rho(C^*(\A))''$ is a type I von Neumann algebra.

\subsubsection{Factorial decomposition}
\begin{proposition}\label{pr:decomposition}
Let $\A$ be a M\"obius covariant net with the split property and $\f$ a $\b$-KMS state on $C^*(\A)$ with respect
to the rotation flow $\a$. Then $\f$ can be decomposed a.e.\! uniquely as follows:
 \[
  \f = \int_X^\oplus d\mu(\l)\, \f_\l, 
 \]
 where the GNS representation $\rho_{\f_\l}$ with respect to $\f_\l$ is factorial.
 If $\rho_{\f_\l}$ is type I, then
 \[
  \f_\l(x) = \frac{\Tr\big(e^{-\beta L_0^\l}\r_{\f_\l}(x)\big)}{\Tr(e^{-\beta L^\l_0})}\ ,
 \]
 where $L^\l_0$ is the conformal Hamiltonian in the representation $\rho_{\f_\l}$.
\end{proposition}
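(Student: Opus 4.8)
The plan is to reduce this proposition to the central decomposition theory of von Neumann algebras combined with the KMS theory already assembled in the excerpt. First I would form the GNS representation $(\H_\f, \rho_\f, \Phi)$ of the KMS state $\f$ on $C^*(\A)$. By Lemma \ref{lm:locallynormal}, $\f$ is locally normal and $\rho_\f$ is locally normal; moreover the GNS vector $\Phi$ is separating for the von Neumann algebra $\M \equiv \rho_\f(C^*(\A))''$, and $\Phi$ is cyclic by construction. Since $\A$ has the split property, Proposition \ref{pr:separable} (via local normality and the existence of the cyclic vector $\Phi$) guarantees that $\H_\f$ is separable, which is exactly the countability/separability hypothesis needed to run a direct-integral decomposition. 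The central idea is that the central decomposition of $\M$ over its center $\Z = \M \cap \M'$ yields a measurable field of factors, and because $\f$ is a $\b$-KMS state whose modular automorphism group is (the GNS implementation of) the rotation flow $\a$, this central decomposition simultaneously disintegrates $\f$ into states whose GNS representations are the factorial pieces.

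Concretely, the second step is to write $\Z = L^\infty(X,\mu)$ for some standard measure space $(X,\mu)$, giving the direct-integral decomposition $\H_\f = \int_X^\oplus \H_\l \, d\mu(\l)$, $\M = \int_X^\oplus \M_\l \, d\mu(\l)$ with each $\M_\l$ a factor $\mu$-a.e. The vector $\Phi$ decomposes as $\Phi = \int_X^\oplus \Phi_\l \, d\mu(\l)$, and setting $\f_\l(x) = \langle \Phi_\l, \rho_{\f_\l}(x)\Phi_\l\rangle$ gives states on $C^*(\A)$ with factorial GNS representations $\rho_{\f_\l}$, so that $\f = \int_X^\oplus \f_\l \, d\mu(\l)$. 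The point I must verify is that each $\f_\l$ is again a $\b$-KMS state for $\a$: this is where the rotation flow being the modular flow of $\f$ enters. Because $\a$ acts on the center $\Z$—and in fact one expects $\a$ to act trivially on $\Z$ since the modular (hence rotation) automorphism group fixes the center of a von Neumann algebra pointwise—the modular group of $\f$ disintegrates into the modular groups of the fibers $\f_\l$, and these coincide with $\a$ restricted to each fiber. Hence each $\f_\l$ satisfies the KMS condition at the same inverse temperature $\b$. The a.e.\ uniqueness of the decomposition is the standard uniqueness of the central decomposition over a standard center.

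For the final assertion: suppose $\rho_{\f_\l}$ is of type I, i.e.\ $\M_\l = \rho_{\f_\l}(C^*(\A))''$ is a type I factor, which we may identify with $\B(\H_\l)$. Then $\f_\l$ is a normal $\b$-KMS state on this type I factor with respect to the automorphism group $\a$ (the GNS implementation of which, on the fiber, is the rotation group with generator $L_0^\l$). At this point Lemma \ref{lm:type1gibbs} applies verbatim: there is a positive nonsingular selfadjoint $H$, affiliated to $\M_\l$, with $\Tr(e^{-\b H}) < \infty$, $\a_t = \Ad e^{itH}$, and
\[
 \f_\l(x) = \frac{\Tr\big(e^{-\b H}\rho_{\f_\l}(x)\big)}{\Tr(e^{-\b H})}\ .
\]
It remains to identify $H$ with the conformal Hamiltonian $L_0^\l$. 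Since $\a$ is the rotation flow, its implementing generator in the representation $\rho_{\f_\l}$ is by definition $L_0^\l$, and the uniqueness clause of Lemma \ref{lm:type1gibbs} (uniqueness of the one-parameter group satisfying the KMS condition, via \cite[Theorem VIII.1.2]{TakesakiII}) forces $e^{itH}$ and $e^{itL_0^\l}$ to implement the same automorphism group; as both generators are positive and the implementing unitary in a type I factor is unique up to a scalar, one obtains $H = L_0^\l$ up to an additive constant, which cancels in the Gibbs ratio, yielding the claimed formula.

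I expect the main obstacle to be the middle step: establishing rigorously that the central disintegration of the modular/rotation flow produces fiberwise KMS states at the \emph{same} temperature $\b$, i.e.\ that the modular group of $\f$ restricts to the modular group of each $\f_\l$ under the direct-integral decomposition. This requires that $\a$ preserve the center $\Z$ (so that it acts fiberwise) and that the disintegration of automorphisms be compatible with the disintegration of states—a measurability argument that is standard but delicate, and which relies essentially on the separability of $\H_\f$ secured in the first step.
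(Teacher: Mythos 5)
Your overall strategy coincides with the paper's: GNS construction, local normality via Lemma \ref{lm:locallynormal}, separability of the GNS space via Proposition \ref{pr:separable}, central disintegration of $\rho_\f(C^*(\A))''$, fiberwise KMS at the same $\b$ via disintegration of the modular operator, and Lemma \ref{lm:type1gibbs} for the type I fibers; the identification of the generator with $L_0^\l$ up to an additive constant is also as in the paper.

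There is, however, one step you gloss over that does not work as written: you pass directly from the central decomposition of the von Neumann algebra $\M=\rho_\f(C^*(\A))''$ to fiber representations $\rho_{\f_\l}$ of $C^*(\A)$. The algebra $C^*(\A)$ is not norm-separable (it contains the type III factors $\A(I)$), so the standard disintegration theory for representations of separable $C^*$-algebras does not apply: for each fixed $x$ the decomposable operator $\rho_\f(x)$ has fibers defined only up to a null set depending on $x$, and one cannot choose these coherently over all of $C^*(\A)$ so as to obtain a $*$-representation on almost every fiber. Separability of $\H_\f$ alone, which you invoke, does not resolve this. The paper circumvents it exactly along the lines of Proposition \ref{pr:separable}: one first disintegrates the restriction $\rho_\f|_{\mathfrak A}$ to the \emph{separable} $C^*$-subalgebra $\mathfrak A$ generated by the compacts of intermediate type I factors (this is where the split property in the hypotheses is used), obtains locally normal fiber representations $\rho_{\f_\l}|_{\mathfrak A}$ for a.e.\! $\l$ by the argument of \cite[Proposition 56]{KLM01}, and only then extends $\f_\l$ and $\rho_{\f_\l}$ from $\mathfrak A$ to all of $C^*(\A)$ by local normality, using that $\mathfrak A$ is $\s$-weakly dense in each local algebra. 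The remaining concern you flag yourself --- that the modular group disintegrates fiberwise at the same inverse temperature --- is handled in the paper exactly as you anticipate, by disintegrating the modular operator.
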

\begin{proof}
 By Lemma \ref{lm:locallynormal}, the GNS representation $\rho_\f$ is locally normal, and
 by Proposition \ref{pr:separable} $\rho_\f$  acts on a separable Hilbert space.
 By considering the central disintegration of $\rho_\f(C^*(\A))''$,
 we also obtain  the disintegration of the representation of $\rho|_{\mathfrak{A}}$, with $\mathfrak{A}$ any separable, 
 suitably chosen $C^*$-subalgebra of $C^*(\A)$,
 by a similar argument as \cite[Proposition 56]{KLM01} (see also
 \cite[Theorem 8.4.2]{Dixmier77},
 \cite[Theorem IV.8.21 and Section V.1]{TakesakiI}):
 \[
  \rho_\f|_{\mathfrak{A}} = \int_X d\mu(\l)\,\rho_{\f_\l}|_{\mathfrak{A}}
 \]
 and $\rho_{\f_\l}$ are locally normal for almost all $\l$.
 According to this disintegration, the GNS vector $\Phi_\f$ disintegrates
 \[
  \Phi_\f = \int_X d\mu(\l)\,\Phi_{\f_\l}
 \]
 and the state $\<\Phi, \cdot\,\Phi\>$ on $\rho_\f(C^*(\A))''$ gets the disintegration \cite[Proposition IV.8.34]{TakesakiI}:
 \[
  \f(x) = \<\Phi_\f, \rho_\f(x)\Phi_\f\> = \int_X^\oplus d\mu(\l)\,\<\Phi_{\f_\l}, \r_\f(x)_\l\Phi_{\f_\l}\>.
 \]
 Hence we can define $\f_\l(x) = \<\Phi_{\f_\l}, \r_\f(x)_\l\Phi_{\f_\l}\>$ first for
 $x \in \mathfrak{A}$ and then extend it to $C^*(\A)$ by local normality, which is the first statement.
 $\f_\l$ are again KMS states with respect to rotations for almost all $\l$,
 by considering the disintegration of the modular operator.

 If $\r_{\f_\l}$ is of type I, then it follows that the state $\f_\l$ is given by the Gibbs state
 by Lemma \ref{lm:type1gibbs}.
\end{proof}

\subsubsection{General remarks}\label{general}

If we assume conformal covariance, type III representations do not occur
since the rotations are inner.
Furthermore, for type II states on a conformal net, a Gibbs-like formula is valid by replacing $\Tr$
by the unique tracial weight $\t$, c.f.\! Lemma \ref{lm:type1gibbs}.
\begin{lemma}\label{lm:notype3}
 If $\A$ is conformal, then for any KMS states $\f$, $\rho_\f(C^*(\A))''$ contains no type III component.
\end{lemma}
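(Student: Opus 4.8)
The plan is to exploit the fact, recorded just before Proposition \ref{pr:separable}, that conformal covariance makes the rotation flow act innerly on $\lnu(\A)$; this forces the modular automorphism group of $\f$ to be inner, and a von Neumann algebra with inner modular flow is semifinite, hence has no type III summand.

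First I would set up the GNS picture. Write $M=\rho_\f(C^*(\A))''$ and let $\Phi$ be the GNS vector. Since $\f$ is a KMS state, $\Phi$ is separating for $M$ (\cite[Lemma 5.3.8 and Corollary 5.3.9]{BR2}), so $\omega_\Phi=\<\Phi,\slot\,\Phi\>$ is a faithful normal state on $M$. By the KMS condition together with the uniqueness of the flow satisfying it (\cite[Theorem VIII.1.2]{TakesakiII}), the modular automorphism group of $\omega_\Phi$ is the reparametrized rotation flow: $\sigma^{\omega_\Phi}_t(\rho_\f(x))=\rho_\f(\a_{-\b t}(x))$ for $x\in C^*(\A)$.

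Next I would transfer the inner implementation. By Lemma \ref{lm:locallynormal} the KMS state $\f$ is locally normal, so $\rho_\f$ factors through $\lnu(\A)$. Because $\A$ is conformal, the rotation subgroup acts innerly on $\lnu(\A)$: there is a strongly continuous one-parameter unitary group $s\mapsto U(r_s)\in\lnu(\A)$ (a finite product of local elements, by \cite[Theorem 6]{DFK04}) with $\a_s=\Ad U(r_s)$. Applying $\rho_\f$, the unitaries $u_t:=\rho_\f(U(r_{-\b t}))$ lie in $\rho_\f(\lnu(\A))\subset M$, form a strongly continuous one-parameter group, and implement $\sigma^{\omega_\Phi}_t$. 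Thus the modular automorphism group of $\omega_\Phi$ is inner in $M$. I would then invoke Takesaki's characterization: a von Neumann algebra admitting a faithful normal state whose modular flow is implemented by a continuous one-parameter unitary group in the algebra is semifinite. Concretely, perturbing $\omega_\Phi$ by the Connes cocycle $(D\psi:D\omega_\Phi)_t=u_t^*=\rho_\f(U(r_{\b t}))$ trivializes the modular flow, so $\psi$ is a faithful normal semifinite trace; see \cite[\S VIII.3]{TakesakiII}. A semifinite algebra has no type III component, which is the assertion. (Equivalently, one may first disintegrate via Proposition \ref{pr:decomposition} and run the same argument factor by factor.)

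The main obstacle is the careful bookkeeping of the implementing unitaries. One must ensure that the unitaries realizing the inner rotation action on $\lnu(\A)$ actually lie inside $M=\rho_\f(C^*(\A))''$ after passing through the GNS representation—this is exactly where local normality of the KMS state (Lemma \ref{lm:locallynormal}), which lets $\rho_\f$ descend to $\lnu(\A)$, is essential—and that the $u_t$ form a genuine strongly continuous \emph{group}, not merely a pointwise-inner family, since it is the group property (providing a $\sigma^{\omega_\Phi}$-cocycle with Stone generator affiliated to $M$) that yields the trace and hence semifiniteness.
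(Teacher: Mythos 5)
Your proposal is correct and follows essentially the same route as the paper's own proof: the rotation flow is inner on $\lnu(\A)$ by \cite[Theorem 6]{DFK04}, the KMS condition identifies the modular group of the GNS state with the rescaled rotation flow, and an algebra with inner modular automorphisms is semifinite, hence has no type III component. Your additional bookkeeping (local normality to descend to $\lnu(\A)$, and the Connes-cocycle argument producing the trace) simply makes explicit what the paper leaves to the cited references \cite[Theorem IV.8.21]{TakesakiI} and \cite[Theorem VIII.3.14]{TakesakiII}.
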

\begin{proof}
 As we saw in Section \ref{universal}, $\widetilde{\mob}$, especially the rotations, is inner.
Thus, the modular automorphisms of $\rho_\f(C^*(\A))''$ with respect to $\f$ are inner, hence the
 $\rho_\f(C^*(\A))''$ cannot have a type III component (see \cite[Theorem IV.8.21, Section V.1]{TakesakiI},
 \cite[Theorem VIII.3.14]{TakesakiII}).
\end{proof}

Let $\A$ be a conformal net and $\r$ a representation of $\A$ on $\H_\r$.
As we recalled in Section \ref{universal}, by conformal covariance,
there is a canonical inner implementation $U_\r$ on $\H_\r$ with $U_\r(g)\in \r(C^*(\A))''$ of $\widetilde{\mob}$.
The generator of the associated unitary rotation one-parameter subgroup of $U_\r$ is positive \cite{Weiner06}, which we denote
by $L_0^\r$ and we call it the conformal Hamiltonian of $\r$.
Of course, in case $\r$ is irreducible, this gives the usual definition of the conformal Hamiltonian.

\begin{lemma}\label{lem:II}
Let $\A$ be a conformal net and $\f$ an extremal, rotational $\b$-KMS state.
Suppose the GNS representation $\r_\f$ of $\f$ is of type II,
namely $\r_\f(C^*(\A))''$ is a type II factor \footnote{In this case, it would be necessarily type II$_\infty$ as the local algebras are of type III.}.
Let $\t$ denote the semi-finite trace of $\r_\f(C^*(\A))''$. 
Then there is a positive self-adjoint operator $L_0^\rho$ affiliated to $\r_\f(C^*(\A))''$ as above
and we obtain
\[
 \f(x) =\frac{\t \big(e^{-\b L_0^{\r}}\rho(x)\big)}{\t(e^{-\b L_0^{\r}})}\ , \quad x\in C^*(\A)\ ,
\]
In particular $\t(e^{-\b L_0^{\r_\l}})<\infty$.
\end{lemma}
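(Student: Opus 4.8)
The plan is to mimic the type I argument of Lemma \ref{lm:type1gibbs}, replacing the operator trace $\Tr$ by the semi-finite trace $\t$ on the type II$_\infty$ factor $\M\equiv\r_\f(C^*(\A))''$. First I would record that, since $\f$ is a KMS state, the GNS vector $\Phi$ is separating for $\M$, so $\f$ restricted to $\M$ (where $\M$ is identified with $\r_\f(C^*(\A))''$ via the GNS representation) is a faithful normal state, and the modular automorphism group $\s^\f_t$ of $\f$ coincides (up to the rescaling $t\mapsto -\b t$) with the rotation flow $\a$ implemented by $U_\f$. This uses that, for a conformal net, the rotations are inner and implemented by canonical unitaries $U_\r(g)\in\M$ with positive generator $L_0^\r$, exactly as recalled before the statement.

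Next I would invoke the theory of the Radon--Nikodym derivative of a faithful normal state with respect to a semi-finite trace on a semi-finite von Neumann algebra. Since $\M$ is type II$_\infty$ with semi-finite trace $\t$, there is a positive self-adjoint operator $T$ affiliated to $\M$, with $\t(T)=1$, such that $\f(x)=\t(Tx)$ for $x\in\M$; this is the exact analogue of the trace-class density used in the type I case, now furnished by the general noncommutative $L^1$/Radon--Nikodym theory for semi-finite traces (Pedersen--Takesaki, or \cite[Theorem VIII.2.x]{TakesakiII} on the spatial derivative). Writing $T=e^{-\b K}$ with $K$ self-adjoint and affiliated to $\M$, I get $\f(x)=\t(e^{-\b K}x)/\t(e^{-\b K})$ after the usual normalization, and $\t(e^{-\b K})<\infty$ because $T$ is $\t$-integrable.

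The remaining task is to identify $K$ with the conformal Hamiltonian $L_0^\r$. By \cite[Example 2.5.16]{BR2} the flow $t\mapsto\Ad e^{-\ima\b t K}$ is the modular group of $\f$; by the uniqueness of the one-parameter automorphism group satisfying the KMS condition \cite[Theorem VIII.1.2]{TakesakiII}, this modular group equals the rotation flow $\a$. Since $\a$ is already implemented by the canonical $U_\f$ with positive generator $L_0^\r$, the two implementing generators $K$ and $L_0^\r$ differ at most by an additive central scalar; as $\M$ is a factor the center is trivial, so after normalizing the scalar I may take $K=L_0^\r$, which is positive by \cite{CW05}. This yields the claimed formula and $\t(e^{-\b L_0^\r})<\infty$.

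The main obstacle I expect is the Radon--Nikodym step: in the type II case one cannot simply cite ``density operator with respect to $\Tr$'' as in \cite[Proposition 2.4.3]{BR2}, and one must justify that the density $T$ is a genuine positive self-adjoint operator affiliated to $\M$ (not merely an element of a predual) and that $\log T$ generates precisely the modular flow. The cleanest route is to observe that the KMS property directly gives that $\a$ (reparametrized) is the modular group, and then characterize $K$ as the generator already known to implement it inside $\M$, thereby reducing the hard analytic content to the uniqueness of KMS-implementing generators in a factor rather than to constructing the density from scratch.
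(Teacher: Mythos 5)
Your proposal is correct and follows essentially the same route as the paper's proof: use the KMS property to identify the modular group of $\f$ with the rescaled rotation flow generated by the canonical inner $L_0^\r$, apply the Radon--Nikodym theorem for the semi-finite trace $\t$ to write $\f=\t(h\,\cdot)$ with $h$ positive and affiliated to $\M$, and then conclude $h$ is proportional to $e^{-\b L_0^\r}$ by uniqueness of the modular group (with factoriality fixing the normalization). The extra care you take with the Radon--Nikodym step and the additive-scalar ambiguity is a reasonable elaboration of what the paper leaves implicit, not a different argument.
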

\begin{proof}
Set $\M \equiv \r_\f(C^*(\A))''$. 
By the KMS property, the GNS vector $\xi_\f$ is cyclic and separating for $\M$ and $\Ad e^{-i\b t L_0^{\r_\l}}$
is the modular group of $\M$ w.r.t.\! to the state $\bar\f\equiv\langle\xi_\f , \cdot\ \xi_\f\rangle$ on $\M$.
By the Radon-Nikodym theorem, $\bar\f=\t(h\cdot)$ with $h$ a positive operator on $\H_\r$ affiliated to $\M$,
and $\t(h)=1$. The modular group of $\bar\f$ is then equal to $\Ad h^{it}$.
Then $h$ is proportional to $e^{-\b L_0^{\r_\l}}$, thus $h = e^{-\b L_0^{\r_\l}}/\t(e^{-\b L_0^{\r_\l}})$ and the Lemma follows.
\end{proof}

\subsubsection{Nets of type I}\label{typeIrep}
We say that a M\"obius covariant net is of \textbf{type I} if it admits only
locally normal representations $\rho$ such that $\rho(C^*(\A))''$ is of type I.

Some important conformal nets turn out to be type I,
therefore, any extremal KMS state is the Gibbs state in one of the irreducible representations.
\begin{theorem}\label{th:typeI}
If a conformal net $\A$ is of type I, then any rotational $\b$-KMS state $\f$ is a convex combination (integration)
 of the Gibbs states in irreducible representations.
\end{theorem}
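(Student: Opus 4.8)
The plan is to deduce the theorem from the factorial disintegration of Proposition \ref{pr:decomposition}, the only genuinely new ingredients being the type I hypothesis on $\A$ and the identification of a factorial type I Gibbs state with a Gibbs state in an honest irreducible representation.

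First I would check that Proposition \ref{pr:decomposition} applies. Since $\A$ is conformal it has the split property \cite{MTW16}, so a rotational $\b$-KMS state $\f$ on $C^*(\A)$ admits the a.e.\ unique central disintegration
\[
 \f = \int_X^\oplus d\mu(\l)\, \f_\l,
\]
where for almost every $\l$ the GNS representation $\rho_{\f_\l}$ is factorial and locally normal and each $\f_\l$ is again a rotational $\b$-KMS state. (Internally this uses Lemma \ref{lm:locallynormal} and Proposition \ref{pr:separable} to get local normality and separability of $\H_{\rho_\f}$.)

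Next I would feed in the type I assumption. By definition a type I net admits only locally normal representations whose weak closure is of type I; applied to the factorial $\rho_{\f_\l}$ this forces $\M_\l \equiv \rho_{\f_\l}(C^*(\A))''$ to be a \emph{type I factor} for almost every $\l$ (in particular no type II or III component survives, compare Lemmas \ref{lm:notype3} and \ref{lem:II}). Viewing $\f_\l$ as the vector state $\langle \Phi_{\f_\l},\,\cdot\,\Phi_{\f_\l}\rangle$ on $\M_\l$, it is a normal $\b$-KMS state on a type I factor, so the last assertion of Proposition \ref{pr:decomposition} (equivalently Lemma \ref{lm:type1gibbs}) gives
\[
 \f_\l(x) = \frac{\Tr\big(e^{-\b L_0^\l}\rho_{\f_\l}(x)\big)}{\Tr(e^{-\b L_0^\l})},
\]
with $\Tr$ the trace of $\M_\l$ and $\Tr(e^{-\b L_0^\l})<\infty$. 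I would then identify this with a Gibbs state in an irreducible representation: writing $\M_\l\cong\B(\K_\l)$ in standard form, i.e.\ $\rho_{\f_\l}\cong \pi_\l\otimes 1$, this isomorphism carries $\rho_{\f_\l}(x)$ to $\pi_\l(x)$, the covariance unitaries $U_{\rho_{\f_\l}}(g)\in\M_\l$ to those of $\pi_\l$, hence $L_0^\l$ to $L_0^{\pi_\l}$, and $\Tr$ to the trace on $\B(\K_\l)$; since $\pi_\l$ is irreducible the formula becomes $\f_\l=\f_{\pi_\l,\b}$ as in \eqref{Gibbs}. Reassembling yields
\[
 \f = \int_X^\oplus d\mu(\l)\, \f_{\pi_\l,\b},
\]
the asserted integration of Gibbs states in irreducible representations.

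I expect the main obstacle to be this last step: making the passage from the factorial type I representation $\rho_{\f_\l}$ to the irreducible $\pi_\l$ rigorous and, above all, \emph{measurable} in $\l$, so that $\l\mapsto\pi_\l$ and $\l\mapsto L_0^{\pi_\l}$ form measurable fields and the right-hand integral is genuinely well defined. This amounts to choosing the isomorphisms $\M_\l\cong\B(\K_\l)$ measurably along the disintegration, which should follow from the measurable structure already underlying Proposition \ref{pr:decomposition} but requires care; the remaining steps are bookkeeping on top of the cited results.
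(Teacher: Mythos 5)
Your proposal follows exactly the paper's route: the paper proves Theorem \ref{th:typeI} in one line as ``immediate from Lemmas \ref{lm:type1gibbs} and \ref{pr:decomposition}'', i.e.\ disintegrate into factorial KMS states, invoke the type I hypothesis so each factor is type I, and apply the Gibbs-state lemma. Your expansion, including the identification of the factorial type I representation with an irreducible one and the (legitimate but unaddressed-in-the-paper) measurability caveat, is a faithful and somewhat more careful version of the same argument.
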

\begin{proof}
 Immediate from Lemma \ref{lm:type1gibbs} and Proposition \ref{pr:decomposition}
 (note that the split property follows from conformal covariance \cite{MTW16}).
\end{proof}
As we recalled in Section \ref{universal}, for a conformal net the representatives of $\widetilde{\mob}$
are inner and unique, hence any locally normal representation $\rho$ of the net (or the universal algebra $C^*(\A)$)
is $\widetilde{\mob}$-covariant. The implementation is unique if we assume that the representatives belong to
$\rho(C^*(\A))''$. With this unique inner implementation, the lowest eigenvalue $l_0$ of the generator $L_0^\rho$
of rotations is non-negative \cite[Theorem 3.8]{Weiner06}.

\begin{proposition}\label{pr:typeI}
 Let $\A$ be a conformal net and assume that there are only countably many equivalence classes of
 locally normal irreducible representations with a specified
 lowest eigenvalue of the generator of rotations. Then $\A$ is of type I.
\end{proposition}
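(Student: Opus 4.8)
The plan is to argue by contradiction, reducing everything to a single factor representation and then using the central element $r_{2\pi}\in\widetilde{\mob}$ to feed the countability hypothesis into a pigeonhole argument. Suppose $\A$ is not of type I, so that there is a locally normal representation with $\rho(C^*(\A))''$ not of type I. A direct integral of type I von Neumann algebras is again of type I, so in the central disintegration of $\rho$ (performed exactly as in the proof of Proposition \ref{pr:decomposition}, after restricting to the separable subalgebra $\mathfrak A$ of Proposition \ref{pr:separable} and extending the fibres by local normality) at least one factor component fails to be of type I. Hence I may assume from the outset that $\rho$ is a locally normal \emph{factor} representation on a separable Hilbert space $\H_\rho$ with $M:=\rho(C^*(\A))''$ a factor that is not of type I.

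Next I would decompose $\rho$ into irreducibles. Diagonalizing a maximal abelian subalgebra $A\cong L^\infty(X,\mu)$ of the commutant $M'$ yields $\H_\rho=\int_X^\oplus \H_x\,d\mu(x)$ and $\rho=\int_X^\oplus \rho_x\,d\mu(x)$. Since the relative commutant of a maximal abelian subalgebra is itself, the decomposable part $M'\cap A'$ of $M'$ equals $A$; reading this fibrewise gives $\rho_x(C^*(\A))'=\CC$ for a.e.\ $x$, so $\rho_x$ is irreducible, and it is locally normal for a.e.\ $x$. As $L_0^\rho$ is affiliated to $M$ it commutes with $A$, hence is decomposable, $L_0^\rho=\int_X^\oplus L_0^{\rho_x}\,d\mu(x)$, where $L_0^{\rho_x}$ is the conformal Hamiltonian of $\rho_x$; its lowest eigenvalue $l_0(x)\ge 0$ depends measurably on $x$.

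The decisive step is to control $x\mapsto l_0(x)$. In each irreducible fibre the univalence $U_{\rho_x}(r_{2\pi})=e^{2\pi\ima L_0^{\rho_x}}$ is a scalar $\omega(x)\in\mathbb T$, because $r_{2\pi}$ is central in $\widetilde{\mob}$ and covers the identity diffeomorphism of $S^1$, so its canonical implementer commutes with every $\rho_x(\A(I))$, hence with $\rho_x(C^*(\A))''=\B(\H_x)$ (this is the conformal spin--statistics input); evaluating on a lowest-energy vector gives $\omega(x)=e^{2\pi\ima l_0(x)}$. Consequently $e^{2\pi\ima L_0^\rho}=\int_X^\oplus \omega(x)\1_{\H_x}\,d\mu(x)$ is the diagonal multiplication by $\omega$, so it lies in $A\subseteq M'$; on the other hand $e^{2\pi\ima L_0^\rho}=U_\rho(r_{2\pi})\in M$. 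As $M$ is a factor, $e^{2\pi\ima L_0^\rho}\in M\cap M'=\CC\1$, so $\omega$ is a.e.\ equal to a constant $e^{2\pi\ima c}$. Therefore $l_0(x)\in(c+\ZZ)\cap[0,\infty)$ for a.e.\ $x$, and in particular $l_0$ takes values in a countable set.

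Finally I would run the pigeonhole argument. Since $l_0$ is a.e.\ valued in a countable set, some value $c^*$ satisfies $\mu(Y)>0$ for $Y:=l_0^{-1}(c^*)$. If only countably many equivalence classes of irreducibles occurred among $\{\rho_x:x\in Y\}$, one class would be carried by a subset $Y_0\subseteq Y$ of positive measure; then the spectral projection $\chi_{Y_0}\in A\subseteq M'$ reduces $\rho$ to a multiple $\sigma\otimes\1$ of a single irreducible $\sigma$, and since $\chi_{Y_0}$ has central support $\1$ in the factor $M$ we would get $M\cong M\chi_{Y_0}=\B(\H_\sigma)\otimes\1$, of type I, a contradiction. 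Hence $\{\rho_x:x\in Y\}$ contains uncountably many inequivalent locally normal irreducible representations, all with lowest eigenvalue $c^*$, contradicting the hypothesis; therefore $\A$ is of type I. I expect the main obstacle to be the measure-theoretic bookkeeping of the disintegration into irreducibles over the non-separable algebra $C^*(\A)$ --- in particular verifying that the fibres $L_0^{\rho_x}$ really are the conformal Hamiltonians and that $l_0(\cdot)$ is measurable --- whereas the centrality of the univalence, which is exactly where the countability hypothesis is consumed, is comparatively clean.
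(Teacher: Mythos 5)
Your proposal is correct and follows essentially the same route as the paper: reduce to a locally normal factorial representation, disintegrate into locally normal irreducibles over a maximal abelian subalgebra of the commutant, use that the inner implementer $U_\rho(r_{2\pi})$ lies in $M\cap M'=\CC\1$ to confine the fibre lowest eigenvalues to a countable set $l_0+\NN_0$, and derive a contradiction with the countability hypothesis. The only difference is cosmetic: where you run a self-contained pigeonhole argument (a positive-measure set of mutually equivalent irreducible fibres would force $M$ to be of type I), the paper instead cites \cite[Proposition 57, Corollary 58]{KLM01} for the fact that a non-type-I factor representation disintegrates into almost everywhere mutually inequivalent irreducibles.
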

\begin{proof}
 By local normality and its disintegration restricted to $\mathfrak{A}$ as in Proposition \ref{pr:decomposition},
 it is enough to treat factorial representations.
 Let us consider a locally normal factorial representation $\rho$ of $\A$.
 The implementation of the $2\pi$-rotation commutes with any local element, hence with
 $\rho(C^*(\A))''$, on the other hand, it belongs to $\rho(C^*(\A))''$ by our choice that it is inner.
 When $\rho(C^*(\A))''$ is a factor, the implementation is then a scalar. This applies to any
 integer-multiple of $2\pi$, hence the spectrum of $L_0^\rho$ must be included in $\NN_0 + l_0$,
 where $l_0 \ge 0$, and $\NN_0$ is the set of non-negative integers.
 
 Now, we consider any disintegration of $\rho|_\mathfrak{A}$ into irreducible representations
 where $\mathfrak{A}$ is the separable $C^*$-subalgebra of $C^*(\A)$ as in Proposition \ref{pr:decomposition}
 (this is possible, by choosing a maximally abelian algebra in $\rho(C^*(\A))'$
 because we have the split property: see \cite[Proposition 56]{KLM01} for disintegration and
 \cite{MTW16} for the implication of the split property from conformal covariance):
 \[
  \rho_\mathfrak{A} = \int_X^\oplus \rho_\l d\mu(\l),
 \]
 where $X$ is a certain index set.
 Let us assume, by contradiction, that $\rho(C^*(\A))''$ is a factor of not type I.
 Then by \cite[Proposition 57, Corollary 58]{KLM01}, for a fixed $\l$, $\rho_\l$ is locally normal,
 hence extends to $C^*(\A)$ and must be inequivalent  to $\rho_{\l'}$ for almost all $\l'$,
 and there are uncountably many such $\l'$'s. But on the other hand,
 the inner implementation of $\widetilde{\mob}$ also disintegrates and the lowest
 eigenvalue of $L_0$ remains in $\NN_0 + l_0$ for each $\l$. By assumption, there are only countably many such inequivalent
 representations, which contradicts the above uncountable family of representations.
 This concludes the proof that $\rho$ is type I.
\end{proof}

We have two basic examples with this property.
\begin{example}\label{ex:uone}
 The $\mathrm{U}(1)$-current net $\A_{\mathrm{U}(1)}$:
 In two-dimensional spacetime, the naively defined massless free field is plagued by the infrared problem.
 Yet it is possible to consider its derivative. Its chiral components are called the $\mathrm{U}(1)$-current.
 See \cite{BMT88, Longo08} for its operator-algebraic formulation.

 The algebra is generated by the Fourier modes $\{J_n\}$ of the current which satisfy the following relations
$
  [J_m, J_n] = m\delta_{m+n, 0}
$.
 This algebra has a distinguished representation with the vacuum vector $\Omega$ such that
$
  J_m \Omega = 0 \text{ for }m \ge  0$, $ J_m^* = J_{-m}
$.
 For a smooth function $f$ on $S^1$, one defines the Weyl operator $W(f)$ by
$
  W(f) = \exp\left(i \sum_m \hat f_mJ_m\right)
$,
 where $\hat f_m$ are the Fourier components of $f(z) = \sum_m \hat f_m e^{imz}$.
 
 One defines the net by $\A_{\mathrm{U}(1)}(I) = \{W(f): \supp f \subset I \}''$. It turns out that this net is
 conformally covariant. The generator of rotations is given by the Sugawara formula
\[
  L_0 = \frac12 J_0^2 +\sum_{m>0} J_{-m}J_m.
\]
 For each $q \ge 0$, there are irreducible representations of the net $A_{\mathrm{U}(1)}$ given by the state
 $\Omega_q$ such that $J_m\Omega = 0$ for $m > 0$ and $J_0\Omega_q = q\Omega$ \cite{BMT88}.

 It can be proved that they are indeed all irreducible locally normal representations \cite{CW16}.
 By their local energy bounds, $\{J_m\}$ can be also defined in any locally normal representation. 
 In each $\rho$ of these representations, $L_0^\rho$ is again given by the above Sugawara formula
 and the lowest eigenvalue is $\frac {q^2} 2$. Namely, only two values $q$ and $-q$ share the same lowest energy.
 By Proposition \ref{pr:typeI} and Theorem \ref{th:typeI}, all KMS states with respect to rotations
 are a direct integral of Gibbs states.
 
 We also note that the regular KMS states (namely, those in whose GNS representation
 the generators $\{J_m\}$ can be defined) have been classified by \cite{BMT88}.
\end{example}

\begin{example}\label{ex:vir}
 Virasoro nets $\vir_c$: the net generated by the conformal covariance itself is called the
 Virasoro net. More precisely, one considers the group $\mathrm{Diff}(S^1)$ and its projective 
 unitary representations. There is a natural action of rotations, and if this action is also implemented
 by unitary operators and the generator is positive, then we call such a projective representation
 of $\mathrm{Diff}(S^1)$ a positive-energy representation. Such positive-energy irreducible representations
 have been classified by the so-called central charge $c > 0$ and the lowest energy $h\ge 0$
 \cite{GKO86, KR87, GW85, NS15}.
 The possible values of $c$ and $h$ are: $c = 1-\frac{6}{m(m+1)}$ and $h = \frac{((m+1)r-ms)^2-1}{4m(m+1)}$,
 where $m = 2,3,4,\cdots$ and $r = 1,2,3,\cdots, m-1$ and $s = 1,2,3,\cdots, r$, or $c\ge 1$ and $h\ge 0$.
 
 For each such a positive-energy representation $\pi_c$ with $h = 0$, one can construct the corresponding
 Virasoro net by $\vir_c(I) = \{\pi_c(g): \supp g \subset I\}''$ and it constitutes a conformal net
 (see \cite{Carpi04}). If $c < 1$, $\vir_c$ is completely rational \cite{KL04-1}.
 
 Let us consider $c \ge 1$. To any irreducible (hence type I) locally normal irreducible representation of $\vir_c$
 there corresponds a positive-energy representation of $\mathrm{Diff}(S^1)$ with $c\ge 1$ and $h \ge 0$ (see
 \cite[Proposition 2.1]{Carpi04}). Conversely, for the values $c\ge 1, h \ge 0$
 there is a corresponding locally normal irreducible representation $\pi_h^c$ of $\vir_c$
 \cite{BS90}\cite[Section 2.4]{Carpi04}\cite{Weiner16}.

 Therefore, our Proposition \ref{pr:typeI} applies to any value of $c \ge 1$ and obtain
 that any KMS state on $\vir_c$ whose GNS representation is factorial is the Gibbs state
 corresponding to the value $h$, and all such $h \ge 0$ are possible (the latter can be read off from
 the character formula, e.g.\! \cite{KR87}).
\end{example}

For a M\"obius covariant net $(\A, U,\Omega)$, one can naturally consider the tensor product
$(\A\otimes\A, U\otimes U, \Omega \otimes \Omega)$.
Any finite tensor product of these nets has again the same property. Indeed we have the following.

\begin{proposition}
 A M\"obius covariant net with the split property $\A$ is of type I
 if and only if \footnote{Actually, the split property is not necessary for the ``only if'' part.}
 any factorial locally normal representation of $\A\otimes\A$ is of the form $\rho_1\otimes\rho_2$.
\end{proposition}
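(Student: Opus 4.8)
The plan is to prove the two implications separately, in each case analysing a factorial locally normal representation $\pi$ of $\A\otimes\A$ through its restrictions to the two tensor factors. Throughout I write $\pi_1,\pi_2$ for the (locally normal) representations of $\A$ obtained by restricting $\pi$ to the first copy $\A\otimes\1$ and to the second copy $\1\otimes\A$, and $\M_i=\pi_i(\A)''$; these have commuting images with $\M_1\vee\M_2=\pi(\A\otimes\A)''$.

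For the \emph{only if} part, assume $\A$ is of type I and let $\pi$ be factorial, so $\M_1\vee\M_2$ is a factor. First I would observe that $\M_1$ and $\M_2$ are themselves factors: any $z\in Z(\M_1)=\M_1\cap\M_1'$ commutes with $\M_1$ and, since $\M_2\subset\M_1'$, also with $\M_2$, so $z\in Z(\M_1\vee\M_2)=\CC\1$. Because $\A$ is of type I, the locally normal representations $\pi_1,\pi_2$ force $\M_1$ and $\M_2$ to be type I factors. Now the elementary structure theory of type I factors applies: $\M_1$ induces a decomposition $\H\cong\H_1\otimes L$ with $\M_1=\B(\H_1)\otimes\1$ and $\M_1'=\1\otimes\B(L)$, and since $\M_2\subset\M_1'$ is again a type I factor it splits $L\cong\H_2\otimes L'$ with $\M_2=\1\otimes\B(\H_2)\otimes\1$. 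Writing $\pi_1=\rho_1\otimes\1$ and $\pi_2=\1\otimes\rho_2\otimes\1$ with $\rho_1,\rho_2$ irreducible, one gets $\pi(a\otimes b)=\rho_1(a)\otimes\rho_2(b)\otimes\1$, so $\pi=\rho_1\otimes(\rho_2\otimes\1)$ is of the required form. Note that this direction uses only type I-ness of $\A$ and no split property, as predicted by the footnote.

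For the \emph{if} part I would argue by contraposition: assuming $\A$ is not of type I, I produce a factorial representation of $\A\otimes\A$ that does not split. By hypothesis there is a factorial locally normal representation $\sigma$ of $\A$ with $\M=\sigma(\A)''$ not of type I. Passing if necessary to the GNS representation of a faithful normal state of $\M$, I may assume $\M$ is in standard form on $\H_\sigma$, with modular conjugation $J$ and $\M'=J\M J$; as $\M$ is a factor, $\M\vee\M'=\B(\H_\sigma)$. The heart of the argument is a locally normal representation $\bar\sigma$ of the \emph{second} copy of $\A$, on the same space $\H_\sigma$, with $\bar\sigma(\A)''=\M'$. Granting this, $\pi(a\otimes b):=\sigma(a)\bar\sigma(b)$ defines a locally normal representation of $\A\otimes\A$ with $\pi(\A\otimes\A)''=\M\vee\M'=\B(\H_\sigma)$, hence an irreducible one. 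By hypothesis $\pi\cong\rho_1\otimes\rho_2$, and since $\rho_1(\A)''\,\bar\otimes\,\rho_2(\A)''=\B(\H_\sigma)$ forces each factor to be the full algebra, $\rho_1,\rho_2$ are irreducible. Restricting the equivalence to the first copy gives $\M=\pi_1(\A)''\cong\rho_1(\A)''=\B(\H_1)$, contradicting that $\M$ is not of type I.

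The main obstacle is precisely the construction of $\bar\sigma$, that is, realizing the commutant $\M'\cong\M^{\mathrm{op}}$ as the image of a genuine locally normal representation of the net. The plan is to obtain $\bar\sigma$ from a conjugate of $\sigma$, using the geometric reflection (PCT / Bisognano--Wichmann) symmetry of the M\"obius covariant net to convert the antilinear antimultiplicative map $a\mapsto J\sigma(a)^*J$, whose image is $\M'$, into a bona fide representation of $\A$; this is the step where the split property is needed, to guarantee local normality of the resulting representation and hence the applicability of the hypothesis. Everything else reduces to the elementary von Neumann algebra facts used in the first paragraph.
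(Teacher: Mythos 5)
Your proposal follows essentially the same route as the paper's proof in both directions, and its overall logic is sound. In particular, your ``if'' direction is exactly the paper's construction: starting from a factorial locally normal $\sigma$ with $\M=\sigma(\A)''$ not of type I, put $\M$ in standard form and set $\bar\sigma(x)=J_\sigma\,\sigma(JxJ)\,J_\sigma$ with $J$ a geometric (PCT-type) conjugation of the net, so that $\pi(a\otimes b)=\sigma(a)\bar\sigma(b)$ is irreducible with a non-type-I restriction to the first factor; you also correctly identify that the split property enters only to make $\pi$ locally normal, via an intermediate type I factor $\N(I_1,I_2)$ containing $\sigma(\A(I_1))$.

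The one point you pass over, and which is the actual technical content of the paper's ``only if'' argument, is your opening assertion that $\M_1=\pi(\A\otimes\1)''$ and $\M_2=\pi(\1\otimes\A)''$ ``have commuting images.'' A representation of the net $\A\otimes\A$ is only a compatible family of normal representations of the single-interval algebras $\A(I)\,\overline{\otimes}\,\A(I)$, so $\pi(a\otimes\1)$ and $\pi(\1\otimes b)$, for $a\in\A(I_1)$ and $b\in\A(I_2)$, are known to commute only when $I_1$ and $I_2$ are contained in a common interval; if $\overline{I_1\cup I_2}=S^1$ there is no such interval, and locality is of no help when $I_1\cap I_2\neq\emptyset$. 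The paper supplies this step by additivity and local normality: every element of $\M_i$ lies in the von Neumann algebra generated by the images of intervals of length less than $\pi/2$, any two such intervals do lie in a common interval, and one then extends back by additivity. Once this commutation is in place, your argument that $\M_1$ and $\M_2$ are factors (hence type I factors, by hypothesis on $\A$) and the resulting splitting $\M_1=\B(\H_1)\otimes\1$, $\M_2\subset\1\otimes\B(L)$ proceed exactly as in the paper. So: same approach, correct conclusion, but the commutation claim needs the additivity argument rather than being taken for granted.
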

\begin{proof}
 Suppose that $\A$ has only locally normal type I representations. Take a locally normal factorial representation $\tilde\rho$ of $\A\otimes\A$.
 We show that the center $\Z\left(\bigvee_{I\in\I}\tilde\rho(\A(I)\otimes\CC\1)\right)$ is trivial. Indeed,
 on one hand we have $\bigvee_{I\in\I}\tilde\rho(\A(I)\otimes\CC\1) \subset \tilde\rho(C^*(\A\otimes\A))''$.
 On the other hand,
 let us take
 \[
 p \in \Z\left(\bigvee_{I\in\I}\tilde\rho(\A(I)\otimes\CC\1)\right) = \left(\bigvee_{I\in\I}\tilde\rho(\A(I)\otimes\CC\1)\right)\cap\left(\bigvee_{I\in\I}\tilde\rho(\A(I)\otimes\CC\1)\right)'. 
 \]
 By additivity of the net and local normality of $\tilde\rho$, we have $p \in \bigvee_{I\in\I, |I| < \frac\pi 2}\tilde\rho(\A(I)\otimes\CC\1)$.
 Any element in the latter algebra commutes with $\tilde\rho(\CC\1\otimes\A(I_\k))$, where $|I_\k| < \frac\pi 2$,
 because for any pair of two intervals $I_1, I_2$ shorter than $\frac\pi 2$, one can find an interval which contains both, and
 it follows that the images $\tilde\rho(\A(I_1)\otimes \CC\1)$ and $\tilde\rho(\CC\1\otimes \A(I_2))$ commute.
 Again by additivity, $p$ commutes with $\tilde\rho(\CC\1\otimes\A(I))$ for any $I$
 and, therefore, $p \in \tilde\rho(C^*(\A\otimes\A))'$.
 Namely, $p \in \Z\left(\bigvee_{I\in\I}\tilde\rho(\A(I)\otimes\CC\1)\right) \subset \Z(\tilde\rho(C^*(\A\otimes\A))'') = \CC\1$
 as $\tilde\rho$ is factorial.
 This implies that the restriction of $\tilde\rho$ to $\A\otimes\CC\1$ is already factorial, and by assumption,
 it is of type I, namely its image is of the form $\B(\H_1)\otimes\CC\1$, where $\H_{\tilde\rho} = \H_1\otimes\H_2$.
 As the image $\bigvee_{I\in\I}\tilde\rho(\CC\1\otimes\A(I))$ commutes
 with $\bigvee_{I\in\I}\tilde\rho(\A(I)\otimes\CC\1) = \B(\H_1)\otimes\CC\1$
 by the same argument as above,
 we have $\bigvee_{I\in\I}\tilde\rho(\CC\1\otimes\A(I)) \subset \CC\1\otimes\B(\H_2)$. In other words, $\tilde\rho$ is a product representation
 of the form $\rho_1\otimes\rho_2$.
 
 To show the converse under the split property, we take a non-type I factorial representation $\rho$ of $\A$
 and construct $\bar \rho(x) = J_\rho\rho(JxJ)J_\rho$, where
 $J_\rho$ is the modular conjugation of $\rho(\A)''$ with respect to a certain faithful normal weight
 and $J$ is an antilinear conjugation which maps local algebras to local algebras,
 for example, the modular conjugation of an interval with respect to the vacuum state.
 We define $\tilde \rho(x\otimes y) = \rho(x)\bar\rho(y)$.
 We first show that this is a locally normal representation.
 By the split property, $\A(I_1)$ is included in a type I factor $\N(I_1,I_2)$, where $\overline{I_1}\subset I_2$.
 As $\rho$ is locally normal,
 the image $\tilde\rho(\N(I_1,I_2)\otimes\CC\1) = \rho(\N(I_1,I_2))$ is again a type I factor.
 The image $\tilde\rho(\CC\otimes\A(I_1))$ commutes with this, therefore, $\tilde\rho$ is locally
 a tensor product representation, and therefore, locally normal.
 The consistency condition for $\tilde\rho$ is obvious, hence it is a locally normal representation of $\A\otimes\A$.
 Yet its image is $\B(\H_\rho)$, and its restriction is not of type I, thus $\tilde\rho$ cannot be a product representation.
\end{proof}

\subsection{Remarks on non-type I representations: open problems}

As we saw, important classes of conformal nets are of type I.
It is an open problem whether there exists a M\"obius covariant net not of type I.
The situation is quite different from the case of nets on the real line,
where any translation KMS state on the quasilocal algebra is of type III$_1$ \cite{CLTW12-1} or of nets on the Minkowski space 
where one can have any type of representation \cite{DS82, DS83, BD84, DFG84}.

One concrete open case is the cyclic orbifold \cite{LX04}. Take a conformal net $\A$, make the tensor product $\A\otimes \A$
and consider the fixed point net $(\A\otimes\A)^\mathrm{flip}$ with respect to the flip between two components.
If $\A$ is completely rational, then $(\A\otimes\A)^\mathrm{flip}$ is again completely rational and
all the sectors can be explicitly written in terms of sectors of $\A$ and twisted sectors.
On the other hand, if $\A$ is not completely rational, then we do not have a complete classification
of sectors of $(\A\otimes\A)^\mathrm{flip}$. In particular, we are not able to exclude the possibility of
non-type I representations, although all known sectors are of type I.

Another candidate for a net with non-type I representations would be an infinite tensor product.
Recall that (see e.g.\! \cite[Section 6]{CW05}) for a given countable family of M\"obius covariant nets $\{(\A_k, U_k, \Omega_k)\}$,
one can define a M\"obius covariant net by
\[
 \A(I) := \bigotimes \A_k(I),\; U(g) := \bigotimes U_k(g),
\]
with respect to the reference vector $\Omega = \bigotimes \Omega_k$ (see e.g.\! \cite[Section XIV.1]{TakesakiIII}).
Let us assume that each $\A_k$ admits a representation $\rho_k$ which is converging to the vacuum representation
in some sense. Then one may hope that the infinite tensor product of representations
$\bigotimes \rho_k$ could make sense. Even if each $\rho_k$ is of type I,
the resulting product could be of non type I. However, this discussion depends on the nature of the sequence $\rho_k$
and a detailed analysis is needed.

We note that the type I property for rotational $\b$-KMS states can be characterized by a compactness criterion similar
to the Haag-Swieca compactness condition (and the Buchholz-Wichmann nuclearity condition, see \cite{Haag96}). 
\begin{proposition}
Let $\A$ be a local conformal net and $\f$ a rotational, factorial $\b$-KMS state of $C^*(\A)$.
Then $\rho\equiv\rho_\f$ is of type I if and only if the closure of $e^{-\frac \b 4 L_0^{\r}}\rho_\f(C^*(\A)_1)\Psi$
is compact in the norm topology of $\H$ for some, hence for every, non-zero vector $\Psi$ of the GNS Hilbert space $\H$ of $\f$.
Here the suffix 1 denotes the unit ball.

In this case $\overline{e^{-s L_0^{\r}}\rho_\f(C^*(\A)_1)\Psi}$ is compact for every $s>0$, $\Psi\in\H$.
\end{proposition}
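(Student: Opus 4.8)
The plan is to first pin down the type of the factor and then prove two implications that together give the biconditional, the ``some, hence every'' clause and the final sentence in one stroke. Set $\M\equiv\rho_\f(C^*(\A))''$. Since $\f$ is factorial, $\M$ is a factor; it contains the type III local algebras $\rho_\f(\A(I))$, so it is properly infinite, and by Lemma~\ref{lm:notype3} it has no type III part. Hence $\M$ is either type I or type II$_\infty$, and ``$\rho$ is of type I'' is the same as ``$\M$ is not type II$_\infty$''. I would then prove: (a) if $\M$ is type I, then $\overline{e^{-sL_0^\r}\rho_\f(C^*(\A)_1)\Psi}$ is compact for \emph{every} $s>0$ and \emph{every} nonzero $\Psi$; and (b) if $\M$ is type II$_\infty$, then this set fails to be compact for \emph{every} nonzero $\Psi$ (taking $s=\b/4$). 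Combining (a) and (b) yields exactly the stated equivalence together with the last paragraph.

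For the necessity direction (a), Lemma~\ref{lm:type1gibbs} gives $\M\cong\B(\K)$ with $\f$ a Gibbs state and $e^{-\b L_0^\r}$ trace class, $L_0^\r\ge0$ the conformal Hamiltonian. Working in the standard form on the Hilbert--Schmidt operators $\mathrm{HS}(\K)$, the vector $\xi_\f$ is (the normalisation of) $e^{-\frac\b2 L_0^\r}$ and the operator $e^{-sL_0^\r}$ acts by \emph{left} multiplication by the compact operator $e^{-sL_0^\r}$ on $\K$. The map $\Phi_s:x\mapsto e^{-sL_0^\r}x\Psi$ is then, for $\Psi$ Hilbert--Schmidt, a uniform-on-the-ball limit of finite-rank maps: replacing $e^{-sL_0^\r}$ by its spectral truncation $P_M e^{-sL_0^\r}$ and $\Psi$ by a finite-rank $\Psi_M$, the error is bounded by $\|(1-P_M)e^{-sL_0^\r}\|_\infty\,\|\Psi\|_2+\|e^{-sL_0^\r}\|_\infty\,\|\Psi-\Psi_M\|_2$, uniformly for $\|x\|\le 1$, and both terms vanish as $M\to\infty$ by compactness of $e^{-sL_0^\r}$ and density of finite-rank operators in $\mathrm{HS}(\K)$. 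Since each truncated map has finite-dimensional range, $\Phi_s(\M_1)$ is totally bounded, hence its closure is compact.

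For the sufficiency direction (b), I would argue by contradiction, assuming $\M$ is type II$_\infty$. By Lemma~\ref{lem:II}, $\M$ is realised in standard form on $L^2(\M,\t)$ with $e^{-\b L_0^\r}$ trace class with respect to the trace $\t$; consequently the spectral projections $q_\Lambda\equiv E_{[0,\Lambda]}(L_0^\r)$ satisfy $\t(q_\Lambda)<\infty$, and on $\mathrm{ran}\,q_\Lambda$ the positive operator $e^{-\frac\b4 L_0^\r}$ is bounded above and below, hence invertible. Restricting the family to $x=q_\Lambda x q_\Lambda$ localises everything in the type II$_1$ corner $N\equiv q_\Lambda\M q_\Lambda$, because $e^{-\frac\b4 L_0^\r}$, $\xi_\f$ and $L_0^\r$ all commute with $q_\Lambda$. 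Choosing $\Lambda$ large enough that $q_\Lambda\Psi\neq 0$, compactness of the full set would force compactness of $\{\,e^{-\frac\b4 L_0^\r}x\,q_\Lambda\Psi : x\in N_1\,\}$, and since $q_\Lambda e^{-\frac\b4 L_0^\r}q_\Lambda$ is a bounded invertible element of $N$ (a homeomorphism of $L^2$), this descends to precompactness of $\{\,x\,q_\Lambda\Psi : x\in N_1\,\}$.

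The main obstacle is to exhibit the required failure of precompactness in the II$_1$ corner. Writing $\rho_0\equiv q_\Lambda\Psi\Psi^* q_\Lambda\in L^1(N,\t)_+$ (nonzero), one has $\|x\,q_\Lambda\Psi\|_2^2=\t(x^*x\,\rho_0)=\|x\rho_0^{1/2}\|_{L^2(N,\t)}^2$, so the set is isometric to $\{x\rho_0^{1/2}:x\in N_1\}\subset L^2(N,\t)$. I would then pick a diffuse abelian subalgebra $A\subset N$ with $\t|_A$ Lebesgue measure on $[0,1]$ and the unitaries $u_k\in A$ given by the characters $e^{2\pi i k t}$, and compute
\[
\|u_n\rho_0^{1/2}-u_m\rho_0^{1/2}\|_2^2=2\,\t(\rho_0)-2\,\Re\,\t(u_m^*u_n\rho_0),
\]
where $\t(u_m^*u_n\rho_0)$ is the $(n-m)$-th Fourier coefficient of $E_A(\rho_0)\in L^1[0,1]$, $E_A$ the trace-preserving conditional expectation. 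By the Riemann--Lebesgue lemma this tends to $0$ as $|n-m|\to\infty$, so a sufficiently spread-out subsequence of $\{u_k\rho_0^{1/2}\}$ is uniformly $\sqrt{\t(\rho_0)}$-separated, contradicting precompactness. The delicate point, and the reason the finite-trace corner is essential, is precisely that $E_A(\rho_0)$ is an $L^1$ \emph{function} rather than a general measure: pairing against the predual guarantees decay of the Fourier coefficients and sidesteps the fact that an atomic measure would have non-vanishing coefficients. This contradiction rules out type II$_\infty$ and completes the proof.
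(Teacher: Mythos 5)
Your proof is correct, and for the crucial implication it takes a genuinely different route from the paper's. The forward direction (type I $\Rightarrow$ compactness for every $s>0$ and every $\Psi$) is essentially identical in both: pass to the standard form on Hilbert--Schmidt operators and use that $e^{-sL_0^\r}$ is compact because $e^{-\b L_0^\r}$ is trace class by Lemma~\ref{lm:type1gibbs}. For the converse, the paper identifies the map $x\mapsto e^{-\frac\b4 L_0^\r}Je^{\frac\b4 L_0^\r}Jx\Phi$ with the modular nuclearity map $x\mapsto\Delta^{1/4}x\Phi$ via $\Delta=e^{-\b L_0^\r}Je^{\b L_0^\r}J$ and then invokes \cite[Corollary 2.9]{BDL90} to conclude type I, handling the ``some, hence every'' clause separately through the density of $\M\M'\Psi$ and the estimate $\|T_{\Psi_1}-T_{\Psi_2}\|\le\|\Psi_1-\Psi_2\|$. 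You instead exclude the only remaining alternative, type ${\rm II}_\infty$ (the reduction via Lemma~\ref{lm:notype3} and proper infiniteness of $\M$ is correct), by a direct argument: compress to the finite corner $q_\Lambda\M q_\Lambda$, whose trace is finite precisely because $e^{-\b L_0^\r}$ is $\t$-trace class by Lemma~\ref{lem:II}, transport the problem to $L^2(N,\t)$, and exhibit a uniformly separated sequence $u_k\rho_0^{1/2}$ coming from a diffuse masa via the Riemann--Lebesgue lemma. What your route buys is self-containedness --- no appeal to the Buchholz--D'Antoni--Longo nuclearity-implies-type-I theorem --- and it yields the ``for every nonzero $\Psi$'' and ``for every $s>0$'' clauses in one stroke, since you prove failure of compactness for \emph{all} $\Psi\ne0$ in the type II case; the cost is length and reliance on standard but nontrivial ${\rm II}_1$ facts (existence of a diffuse masa isomorphic to $L^\infty[0,1]$, and the $L^1$-density $\rho_0$ of the vector functional, which, as a small notational caveat, is the Radon--Nikodym derivative with respect to $\t$ rather than literally the rank-one operator $q_\Lambda\Psi\Psi^*q_\Lambda$ on $\H$; your subsequent computations use it correctly).
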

\begin{proof}
Let $\M$ be the weak closure of $\rho(C^*(\A))$. By assumption $\M$ is a factor.
Moreover, $\overline{\M_1\Psi} = \overline{C^*(\A)_1\Psi}$ by Kaplansky density theorem.
Note that $e^{-s L_0^{\r}}\in \M$ for $s > 0$. Let $T_\Psi^{(s)}  : \M \to \H$ be the map $x\mapsto e^{-s L_0^{\r}}x\Psi$.
Clearly $T^{(s)}_\Psi$ is compact if and only if $\overline{e^{-s L_0^{\r}}\M_1\Psi}$ is compact.
Now if $\overline{e^{-s L_0^{\r}}\M_1\Psi}$ compact, then $\overline{e^{-s L_0^{\r}}\M_1\Psi'}$ is compact
for any other vector $\Psi'$ in the linear span of $\{xx'\Psi : x\in \M, x'\in \M'\}$,
which is a dense subspace of $\H$ as $\M$ is a factor.
Since $\|T_{\Psi_1} - T_{\Psi_2}\| = \|T_{\Psi_1-\Psi_2}\| \leq \|e^{-s L_0^\r}\|\, \|\Psi_1 - \Psi_2\| \leq \|\Psi_1 - \Psi_2\|$ 
($\|e^{-s L_0^\r}\|\leq 1$ as $L_0^\r$ is positive), $T^{(s)}_{\Psi'}$ is then compact for all $\Psi'\in\H$.

Assume first that $\M$ is of type I. As $\M$ is in the standard form, we may identify $\M \simeq \B(\K)\otimes \CC\1$ where $\H = \K\otimes \overline{\K}$,
and further $\H$ with the Hilbert space $\mathrm{HS}(\K)$ of the Hilbert-Schmidt operators,
so every vector $\Psi\in\H$ with a Hilbert-Schmidt operator $S$. In this identification, an element $x\in \M$ acts
by left multiplication on $\mathrm{HS}(\K)$. Thus 
$e^{-s L_0^{\r}}\M_1\Psi$ is identified with  $e^{-s L_0^\r}B(\K)_1S$,
whose closure is compact because $e^{-\b L_0^\r}$ is of trace class (Lemma \ref{lm:type1gibbs}), hence $e^{-s L_0^\r}$ is compact
for any $s > 0$ hence for $s = \frac\b 4$,
thus $x\in\B(\K)\mapsto e^{-\frac\b 4 L_0^\r}xS$ is compact
(c.f.\! \cite{BDL90}).

On the other hand, assume now that $T^{(\frac\b 4)}_\Psi$ is compact for some non-zero $\Psi$, then by the above argument $T^{(\frac\b 4)}_{\Phi'}$
is compact for any vector $\Phi'$. As the rotation one-parameter group is inner,
the modular operator $\Delta$ of $\M$ w.r.t.\! $\Psi$ is given by
$\Delta = e^{-\b L_0^{\r}}Je^{\b L_0^{\r}}J$ with $J$ the modular conjugation of $(\M,\Phi)$.
Thus the map $x\in \M\mapsto e^{-\frac{\b}4 L_0^{\r}} Je^{\frac{\b}4 L_0^{\r}}J  x\Phi = T^{(\frac{\b}{4})}_{\Phi'}(x)$,
with $\Phi' = Je^{\frac{\b}4 L_0^{\r}}J\Phi$ is compact ($\Phi$ belongs to the domain of $Je^{s L_0^{\r}}J$ if $s<\frac \b 2$).
As this is the modular nuclearity map $x\in \M\mapsto \Delta^{\frac 14}x\Phi\in\H$, $\M$ is of type I by \cite[Corollary 2.9]{BDL90}.
\end{proof}

\section{Outlook}
Although the conformal Hamiltonian is not the physical Hamiltonian, namely it
does not implement the time translation QFT flow, 
there is some physical interest in considering rotational KMS states in CFT.

One example comes from the three-dimensional quantum gravity.
If the cosmological constant is assumed to be negative, one should then look at the solutions
of the Einstein equation which are asymptotically close to the $\mathrm{AdS_3}$ spacetime. Different solutions
have different boundary data and such solutions (with certain fall-off conditions)
have been classified in \cite{GL14}. Two copies of the Virasoro group make the transformations
between these solutions. Such an action of the Virasoro group
is called a coadjoint action \cite{Witten88}. Maloney and Witten \cite{MW10} tried to compute the partition
function of the $\mathrm{AdS_3}$ gravity, but they arrived at an expression which cannot be interpreted
as a trace over a Hilbert space of the exponential of a self-adjoint operator. It has been proposed
to study each orbit of the Virasoro group first, e.g.\! \cite{GL14}. In particular, one can consider the so-called
BTZ black hole solutions \cite{BTZ92}. In a hypothetical quantum theory, the Virasoro group should appear
as a symmetry of the theory, while the black hole should be in a thermal state. 
Furthermore, the energy, hence the mass, of the black hole corresponds to the conformal Hamiltonian
(see \cite[eq.\! (50)]{GL14}).
In this way, KMS states on the Virasoro nets with respect to rotations
should appear naturally. From our results, one can conclude that
all such KMS states can be represented on the direct sum or integral of the Verma module.

Besides, it is an interesting question to make sense of quantum entropy of such black hole states
from the operator-algebraic point of view.

\subsubsection*{Acknowledgement}
We thank Sebastiano Carpi for his remark on the axiomatic set theory,
Alan Garbarz and Mauricio Leston for inspiring discussions on three-dimensional
quantum gravity and Mih\'aly Weiner for informing us of the classification of irreducible sectors
of the $\mathrm{U}(1)$-current net.

R.L. thanks the organisers of the ``von Neumann algebras'' program at the Hausdorff Institute in Bonn for the hospitality extended to him during May and July 2016, when part of this paper has been written.

{\small
\def\cprime{$'$} \def\polhk#1{\setbox0=\hbox{#1}{\ooalign{\hidewidth
  \lower1.5ex\hbox{`}\hidewidth\crcr\unhbox0}}}
 
}

\end{document}